\documentclass[preprint,9pt, numbers,nocopyrightspace]{sigplanconf}

\usepackage[scaled]{helvet} 
\usepackage{url}                  
\usepackage{listings}          
\usepackage{amsmath,amsthm,amssymb}
\usepackage{stmaryrd}
\usepackage{graphicx}
\usepackage{caption}
\usepackage{subcaption}

\usepackage[T1]{fontenc}

\usepackage{xspace}
\usepackage{tabularx}
\usepackage{listliketab}
\usepackage{slantsc}
\usepackage{lipsum} 
\usepackage[colorlinks=true,allcolors=blue,breaklinks,draft=false]{hyperref}

\usepackage{algorithm}
\usepackage{algpseudocode}

\usepackage[show]{ed}
\ifshowednotes
\RequirePackage{paralist}
\RequirePackage{xcolor}
\fi

\newif\iffull
\fulltrue


\newtheorem{theorem}{Theorem}
\newtheorem{lemma}[theorem]{Lemma}
\newtheorem{definition}[theorem]{Definition}
\newtheorem{proposition}[theorem]{Proposition}


\DeclareCaptionFormat{myformat1}{\hrulefill\newline #1#2#3}
\DeclareCaptionFormat{myformat2}{#1#2#3}
\captionsetup[figure]{format=myformat1}
\captionsetup[subfigure]{format=myformat2}

\newcommand{\fcomm}[1]{\vspace{2mm}\State \textcolor[rgb]{0.00,0.30,0.00}{{//{#1}}}}

\algtext*{EndWhile}
\algtext*{EndIf}
\algtext*{EndFor}
\algtext*{EndFunction}
\floatname{function}{fun}
\algnewcommand\algorithmicswitch{\textbf{switch}}
\algnewcommand\algorithmiccase{\textbf{case}}
\algnewcommand\algorithmicassert{\texttt{assert}}
\algnewcommand\Assert[1]{\State \algorithmicassert(#1)}%
\algdef{SE}[SWITCH]{Switch}{EndSwitch}[1]{\algorithmicswitch\ #1\ \algorithmicdo}{\algorithmicend\ \algorithmicswitch}%
\algdef{SE}[CASE]{Case}{EndCase}[1]{\algorithmiccase\ #1}{\algorithmicend\ \algorithmiccase}%
\algtext*{EndSwitch}%
\algtext*{EndCase}%

\def\thefaq{\textsc{NoFAQ}\xspace}
\def\thefuck{\textsc{TheFxxx}\xspace}

\newcommand{\figlabel}[1]{\label{fig:#1}}
\newcommand{\figref}[1]{Figure~\ref{fig:#1}}

\renewcommand{\t}[1]{\textbf{#1}}
\newcommand{\tconst}[1]{\textsc{#1}}
\renewcommand{\tt}[1]{\texttt{#1}}
\newcommand{\bigO}[1]{O(#1)}

\def\e{m}
\def\g{f}
\newcommand{\langc}{\textsc{Fixit}\xspace}
\newcommand{\bnfalt}{\;|\;}
\newcommand{\constStr}{\tconst{Str}}
\newcommand{\varmatch}{\tconst{Var-Match}} 
\newcommand{\varexpr}{\tconst{Var}}
\newcommand{\tmatch}{\t{match}}
\newcommand{\tand}{\t{and}}
\newcommand{\tapply}{\ensuremath{\rightarrow}}
\newcommand{\tsapply}{\ensuremath{\tapply_s}}
\newcommand{\fixRule}[3]{\tmatch~#1~\tand~#2~\tapply~#3}
\newcommand{\sfixRule}[3]{\tmatch~#1~\tand~#2~\tsapply~#3}
\newcommand{\origCmd}{cmd}
\newcommand{\errMsg}{err}
\newcommand{\fixCmd}{fix}
\newcommand{\fconstStr}{\tconst{Fstr}}
\newcommand{\substrAppend}{\tconst{Sub-lr}}
\newcommand{\constPos}{\tconst{Ipos}}
\newcommand{\symPos}{\tconst{Cpos}}
\newcommand{\pl}{p_L}
\newcommand{\pr}{p_R}
\newcommand{\pref}{l}
\newcommand{\suf}{r}
\newcommand{\off}{\delta}
\newcommand{\strtt}[1]{\constStr(\tt{#1})}

\newcommand{\state}{\sigma}
\newcommand{\semantics}[1]{\llbracket{#1}\rrbracket}
\newcommand{\semanticsfix}[1]{\semantics{#1}}
\newcommand{\semanticsfun}[1]{\semantics{#1}^\t{fun}}
\newcommand{\semanticspos}[1]{\semantics{#1}^\t{pos}}

\usepackage{yfonts}
\newcommand{\tsc}[1]{\texttt{#1}}
\def\unify{\tsc{unify}}
\def\substr{\tsc{substr}}
\def\matchexpr{\tsc{match-expr}}

\def\size{\tsc{size}}
\def\indices{\tsc{indices}}

\newcommand\length[1]{|#1|}
\def\conc{\tsc{con}}
\def\ind{I}

\def\scmd{\bar{s}_{cmd}}
\def\serr{\bar{s}_{err}}
\def\sfix{\bar{s}_{fix}}
\def\symfix{\ensuremath{fixes}} 

\theoremstyle{remark}
\newtheorem*{proofsketch}{Proof sketch}

\begin{document}

\title{\textsc{NoFAQ}:  Synthesizing Command Repairs from Examples}

%
%


\authorinfo{Loris D'Antoni}
{\makebox{University of Wisconsin-Madison}}
{loris@cs.wisc.edu}

\authorinfo{Rishabh Singh}
{\makebox{Microsoft Research}}
{risin@microsoft.com}

\authorinfo{Michael Vaughn}
{\makebox{University of Wisconsin-Madison}}
{vaughn@cs.wisc.edu}

\maketitle

\begin{abstract}
Command-line tools are confusing and hard to use
for novice programmers due to their cryptic error messages and lack
of documentation. Novice users often resort to online help-forums for finding corrections to their buggy commands, but have a hard time in searching precisely for posts that are relevant
to their problem and then applying the suggested solutions to their buggy command.

We present a tool \thefaq that uses a set of rules to suggest possible fixes
when users write buggy commands that trigger commonly occurring errors.
The rules are expressed in a language called \langc and each
rule pattern-matches against the user's buggy command and the corresponding error message, and uses
these inputs to
produce a possible fixed command.
Our main contribution is an algorithm based on lazy VSA for synthesizing \langc rules from
examples
of buggy and repaired commands. The algorithm allows users to add new rules
in \thefaq without having to manually encode them.
We present the evaluation of \thefaq on
92 benchmark problems and show that \thefaq is able to instantly synthesize
rules for 81 benchmark problems in real time using just 2 to 5 input-output examples for each rule.
\end{abstract}

\section{Introduction}

Command-Line Interfaces (CLI) let users interact with a computing system
by writing sequences of commands.
CLIs are especially popular
amongst advanced computer users, who
use them to perform small routine tasks such as
committing a file to a repository with version control,
installing software packages,
compiling source code,
finding and searching for files etc.
Even though this mode of interaction is getting replaced by more natural graphical user interfaces,
CLIs are still routinely used for most scripting tasks in Unix and Mac OS.
Even the Windows operating system now officially provides complex command-line interfaces with
products such as Windows Powershell.

Since command-line interactions often require complex parameters and flag settings for specifying the desired intent,
non-expert users find CLIs challenging to use~\cite{promptcomprehension}.
Moreover, after entering an incorrect input command, the user has to deal with cryptic errors that are hard to decipher by just
looking at the verbose text-based documentation of the commands.
For these reasons, users typically resort to online help-forums for finding corrections to their buggy commands.
Unfortunately, this can also be problematic
as users need to precisely search for posts that relate to the issues with their commands and then transform the suggested solutions
to apply them in their context.
Sometimes users also need to create a new post if no relevant post exists (or can be found), and then need to wait for hours or days to obtain a solution to their problem.

\paragraph{What about common errors?}
Recently, a tool \thefuck\footnote{We decided to censor the name of the tool. The tool can be found at \url{http://bit.ly/CmdCorrection}.} was developed for automatically addressing common errors when working with a CLI.
If after typing a command a user receives an error message,
\thefuck uses a set of hard-coded rules to suggest possible fixes to
the user's command. Each rule pattern-matches against the input command
and the error message, and uses these inputs to produce a possible fixed command.
Typical fixes include adding missing flags, creating a missing directory, or changing file extensions.
\thefuck has become
extremely popular and, on GitHub, it
has already been starred by more than 20,000 users and has been forked more than 1000 times.
Despite its success, the tool also has a main limitation:
to add a new rule a developer first needs to
understand the syntax and precise semantics of \thefuck  and
then manually hard-code the rule into the tool.
Due to this complexity, newly added rules have at times caused non-terminating or
unexpected behaviors\footnote{\url{http://bit.ly/1j7zxOr} and \url{http://bit.ly/1YgngXJ}.}.

\paragraph{Synthesizing repairs from examples}
Inspired by the success and limitations of \thefuck, we built
\thefaq (No more Frequently Asked Questions), a tool for automatically
addressing common errors in CLIs.
\thefaq also uses a set of rules for fixing common errors,
but it differs from \thefuck in the following key aspects:
\begin{enumerate}
\item Rules are encoded in a declarative domain-specific language (DSL)  called $\langc$.
\item To add new rules,
 users only provide examples of buggy and repaired commands
 and \thefaq automatically synthesizes the desired \langc rules that are consistent with the given examples.
\end{enumerate}

We envision \thefaq system being used by non-developers and end-users, who can easily extend the system by providing new examples of fixes. The long-term goal of the system is to learn from examples obtained from shell histories of thousands of users in an unsupervised manner.  Although a developer can write similar rules in a system like \thefuck manually, there are two main challenges with doing so. First, it is not feasible to easily add thousands of rules as end-users generally do not have contributor access to \thefuck's source code. Second, even for developers, writing correct rules is difficult and error-prone because of the complexity of the string manipulations needed to perform the command fixes. In fact, \thefuck only consists of less than 100 rules in a little over 1 year, since adding new ones is a fairly complex task. 

The $\langc$ DSL for encoding fix rules is inspired by the types of rules appearing in \thefuck and by common command repairs requested by users on help-forums.
A \langc rule first uses pattern matching and unification to match the command and error message, and then applies a fix transformation if the match succeeds.
The transformations consist of substring and append functions on strings present in the command and error message.

We present an algorithm that efficiently synthesizes $\langc$ rules that are consistent with a given set of input-output examples
using a  Version-space Algebra (VSA). VSA-based synthesis techniques are used to
succinctly represent the set of all expressions that are consistent with a set of examples~\cite{cacm12,smartedit}.
Even though existing VSA data structures can represent an exponential number of \langc rules in polynomial space,
this space can still be quite large.
To address this problem we introduce \emph{lazy version-space algebra}.
Given a set of examples, our algorithm maintains a lazy representation 
of only a subset of the \langc rules that are consistent with the examples.
The rules missing from the version-space are only enumerated when necessary---i.e., when a new input-output
example can only be accounted by adding a \langc rule that is not already present in the version-space. 
Because of the careful design of $\langc$, our synthesis algorithm has a polynomial time complexity. In contrast, existing VSA-based synthesis techniques for string transformations require exponential time~\cite{cacm12}. The polynomial time complexity is crucial for our synthesis algorithm to scale to a large number of fix examples.
Since different examples may refer to different target rules, we propose a strategy
to partition the input examples into groups of examples corresponding to individual rules. We then use the lazy VSA 
algorithm to learn the $\langc$ rules for each partition.

We evaluate the synthesis algorithm implemented in $\thefaq$ on 92 benchmark problems obtained from both $\thefuck$ (76) and online help-forums (16). \thefaq is able to learn the repair rules for 81 of the
buggy commands in these benchmark problems from only
2 to 5 input-output examples each.

\paragraph{Contributions summary:}
\begin{enumerate}
\item \langc, a declarative domain-specific language for encoding
rules that map a command
and an error message to possible fixed commands (\S~\ref{sec:language}).
\item A sound and complete polynomial time synthesis algorithm based on lazy version-space algebra for synthesizing \langc rules from input-output examples (\S~\ref{sec:synthalgo}).

\item An analysis of the formal properties of the \langc language and its synthesis algorithm (\S~\ref{sec:algoproperties}).
\item A qualitative and quantitative evaluation of the synthesis algorithm on 92 benchmarks obtained from both $\thefuck$ and online help-forums (\S~\ref{sec:evaluation}).
\end{enumerate} 

\iffull\else
A long version
of this paper containing all the proofs has been submitted as supplementary material.
\fi

\section{Motivating examples}
We first present the main ideas behind \thefaq using some concrete examples.
All the example rules presented in this section are actual rules appearing in \thefuck system.

\subsection{Adding missing file extension}
\label{sec:motiv:adding}
Java programmers, in particular novice ones, are likely to encounter this error 
when they accidentally pass a class name instead of a source code file to the \texttt{javac} compiler.
\begin{table}[H]
\begin{tabularx}{\columnwidth}{|@{\hspace{.5em}}>{\bfseries}l@{\hspace{.5em}}X@{\hspace{.5em}}|}
\hline
cmd1:  & javac Employee \\
err1: & Class names, `Employee', are only accepted if annotation
       processing is explicitly requested\\
\hline
\end{tabularx}
\end{table}

\noindent Often, this error is provided by the \texttt{javac} compiler when it is invoked on a file
that does not have the proper \texttt{.java} extension.
A seasoned programmer would immediately recognize the problem and add the extension \texttt{.java} at the end of the input file.
\begin{table}[H]
\begin{tabularx}{\columnwidth}{|@{\hspace{.5em}}>{\bfseries}l@{\hspace{.5em}}X@{\hspace{.5em}}|}
\hline
fix1: & javac Employee.java\\
\hline
\end{tabularx}
\end{table}

On the other hand,
a novice programmer will search the web in the hope
of finding a way to address the error and understand how to apply it to their setting.
The goal of \thefaq is to automatically synthesize simple fixes like this one from
examples provided by experienced users and
use the synthesized fixes to help
novice users who encounter similar errors.
For example, let's say that a skilled developer provides another 
triple of the following form.

\begin{table}[H]
\begin{tabularx}{\columnwidth}{|@{\hspace{.5em}}>{\bfseries}l@{\hspace{.5em}}X@{\hspace{.5em}}|}
\hline
cmd2:  & javac Pair \\
err2: & Class names, `Pair', are only accepted if annotation
       processing is explicitly requested\\
fix2: & javac Pair.java\\
\hline
\end{tabularx}
\end{table}
Using these two examples \thefaq will synthesize the following fix rule.

\begin{table}[H]
\begin{tabular}{l}
\tmatch~[\strtt{javac}, $\varmatch(1, \varepsilon, \varepsilon)$]\\
\t{and}~[\strtt{Class}, \strtt{names,}, $\varmatch(2, \texttt{`}, \texttt{',})$\\ 
\qquad~ \strtt{are}, \strtt{only}, \strtt{accepted}, \strtt{if}, \\
\qquad~ \strtt{annotation}, \strtt{processing}, \strtt{is}, \\
\qquad~ \strtt{explicitly}, \strtt{requested}]\\
\tapply~
[\fconstStr(\texttt{javac}),
$\substrAppend(0,0,\varepsilon,\texttt{.java},\varexpr(1))]$
\end{tabular}
\end{table}
\noindent
The first part of the rule (i.e., up to the symbol $\tapply$) pattern-matches
against the command and the error message and binds the input strings
to the corresponding variables.
The variables are then used by the second part of the rule to produce the output.
In this case the $\substrAppend(0,0,\varepsilon,\tt{.java},\varexpr(1))$ expression extracts the complete string associated with $\varexpr(1)$ (a start index of $0$  and an end index of $0$ denotes the identity string extraction), and then prepends the string $\varepsilon$ at the beginning of it,
and appends the string \texttt{.java} at the end of it.

\subsection{Extracting substrings}
\label{sec:motiv:extractingsubs}
The following scenario is another common one for novice Java programmers.

\begin{table}[H]
\begin{tabularx}{\columnwidth}{|@{\hspace{.5em}}>{\bfseries}l@{\hspace{.5em}}X@{\hspace{.5em}}|}
\hline
cmd1:  & java Run.java \\
err1: & Could not find or load main class Run.java\\
fix1: & java Run\\
\hline
\end{tabularx}
\end{table}

\noindent Given this example and another similar one, \thefaq synthesizes the following rule.

\begin{table}[H]
\begin{tabular}{l}
\tmatch~[\strtt{java}, $\varmatch(1, \varepsilon, \texttt{.java})$]\\
\t{and}~[\strtt{Could}, \strtt{not}, \strtt{find}, \strtt{or}, \\
  \qquad~ \strtt{load}, \strtt{main}, \strtt{class}, \\
  \qquad~ $\varmatch(2, \varepsilon, \texttt{.java})$]\\
\tapply~
[\fconstStr(\texttt{javac}), $\substrAppend(0,-5,\varepsilon,\varepsilon,\varexpr(1))]$
\end{tabular}
\end{table}

\noindent This rule extracts the substring of the input file name
starting at index $0$ and ending at
the index $-5$ ($5^\text{th}$ index from the end of the string) in order to remove the \tt{.java} extension.

\subsection{Extracting complex substrings}
\label{complexsubs}
A user was trying to move a picture from one location to another but got the following error message. 
\begin{table}[!h]
\begin{tabularx}{\columnwidth}{|@{\hspace{.5em}}>{\bfseries}l@{\hspace{.5em}}X@{\hspace{.5em}}|}
\hline
cmd1:  & mv photo.jpg Mary/summer12.jpg \\
err1: & can't rename 'photo.jpg': No such file or directory\\
fix1: & mkdir Mary \texttt{\&\&} mv photo.jpg Mary/summer12.jpg\\
\hline
\end{tabularx}
\end{table}

\noindent The error is cryptic for novice command-line users and does not guide them towards the actual problem of the missing directory.
Given this example and another similar one, \thefaq synthesizes the following rule.

\begin{table}[H]
\begin{tabular}{l}
  \tmatch~ [\strtt{mv}, \varmatch(1, $\varepsilon$, $\varepsilon$), \\
    \qquad~ $\varmatch(2, \varepsilon, \varepsilon)$]\\
\t{and}~ [\strtt{can't}, \strtt{rename}, $\varmatch(3,\texttt{`},\texttt{'})$, \strtt{No},  \\
    \qquad~ \strtt{such}, \strtt{file}, \strtt{or}, \strtt{directory}]\\
\tapply~ [\fconstStr(\texttt{mkdir}),
$\substrAppend(0,\symPos(\texttt{/},1,0),\varepsilon,\texttt{\&\&},\varexpr(2))$\\
\qquad \fconstStr(\texttt{mv}), $\substrAppend(0,0,\varepsilon,\varepsilon,\varexpr(1))$,\\
\qquad $\substrAppend(0,0,\varepsilon,\varepsilon,\varexpr(2))]$
\end{tabular}
\end{table}

\indent The second expression in the output extracts the directory name---i.e.,
the substring that starts at index 0 and ends at the index of first occurrence of the character \texttt{/}.
The rule also adds a string \texttt{\&\&} at the end of the extracted string to pipe the two output commands. 

\section{The command repair language $\langc$}
\label{sec:language}
We now describe $\langc$, a domain-specific language for expressing repair rules.
The syntax and semantics of $\langc$ is presented in \figref{dslsyntax} and \figref{dslsemantics} respectively.
The language $\langc$ is designed to be expressive enough to capture most of the 
rules we found in \thefuck and in online help-forums,
but at the same time concise enough to enable efficient learning from examples.

\paragraph{General structure}
Each $\langc$ program is a rule of the form
$$\fixRule{\origCmd}{\errMsg}{\fixCmd}$$
that takes
as input a command $\bar{s}_{cmd}$ and an error $\bar{s}_{err}$ and either produces a fixed command or the undefined value $\bot$.
The inputs $\bar{s}_{cmd}$ and $\bar{s}_{err}$ are lists of strings that
are obtained by extracting all the space-separated strings appearing in the input command and error message respectively.
The output fix produced by the rule is also a list of strings.
From now on, we assume that the inputs and outputs are lists of strings that do not contain space characters.

A rule has three components.
\begin{enumerate}
\item A list of match expressions  $\origCmd=[\e_1,\cdots, \e_l]$ used to pattern match against the input command $\bar{s}_{cmd}$.
\item A list of match expressions $\errMsg=[\e_1,\cdots, \e_k]$ used to pattern match against the input error message $\bar{s}_{err}$.
\item A list of fix expressions  $\fixCmd=[\g_1,\cdots, \g_m]$ used to produce the new fixed command.
\end{enumerate}

\paragraph{Match expressions}
A match expression $\e$ is either of the form $\constStr(s)$ denoting a constant string $s$ or of the form
\linebreak $\varmatch(i,l,r)$. 
A $\varmatch(i,l,r)$ expression denotes a variable index $i$ and requires the matched
string to start with the prefix $l$ and end with the suffix $r$. We assume that no two variable expressions appearing in the match expression have the same index.
When a list of match expressions $[\e_1,\cdots, \e_l]$ is applied to a list
of strings $\bar{s}=[s_1,\ldots, s_l]$ with the same length $l$,
it generates a partial function $\state:  \mathbb{N}\mapsto \Sigma^*$
that assigns variables appearing in the match expressions to the corresponding strings in the input.
For example, evaluating 
$$[\constStr(\texttt{mv}),\varmatch(1, \epsilon, \texttt{.jpg}),\varmatch(2, \epsilon, \texttt{.jpg})]$$ 
on the
list of strings $[\texttt{mv},\texttt{a.jpg},\texttt{b.jpg}]$ produces the function
$\state$ such that $\state(1)=\texttt{a.jpg}$ and  $\state(2)=\texttt{a.jpg}$. 
On the other hand, evaluating it on $[\texttt{mv},\texttt{a.png},\texttt{b.jpg}]$ yields $\bot$, as $\texttt{a.png}$ does not match the required suffix in $\varmatch(1, \epsilon, \texttt{.jpg})$.

\begin{figure}[!t]
    $\begin{array}{lrcl}
    \text{Fix rule} & r & := & \fixRule{\origCmd}{\errMsg}{\fixCmd}\\
    \text{Input cmd}& \origCmd & := & [\e_1,\cdots, \e_l] \\
    \text{Input error}& \errMsg & := & [\e_1,\cdots, \e_m] \\
    \text{Output cmd}& \fixCmd & := &  [\g_1,\cdots, \g_n] \\
    \text{Match expr} & \e & := & \constStr(s) \\
                      & & \bnfalt & \varmatch(i, \pref, \suf)\\
    \text{Fix expr} & \g & := & \fconstStr(s)\\
                    & & \bnfalt & \substrAppend(\pl,\pr,\pref,\suf,\varexpr(i))\\
    \text{Pos expr}& p & := & \constPos(k) \\
                        & & \bnfalt &  \symPos(c,k,\off)\\
    \multicolumn{3}{l}{\textrm{Variables and constants:}} \\
    \multicolumn{4}{l}{
    \begin{array}{rlrl}
    s,s_l,s_r:& string & i,k,\off:& integer\\
    c: & character ~~~~~~&&
    \end{array} }
    \end{array}$
    \caption{Syntax of the rule description language $\langc$.   \label{fig:dslsyntax}}
\end{figure}
\begin{figure*}[!htpb]
    \[
    \arraycolsep=3pt\def\arraystretch{1.8}
    \begin{array}{rcl}
    \semantics{\fixRule{\origCmd}{\errMsg}{\fixCmd}} (\bar{s}_{cmd},\bar{s}_{err}) & = & \begin{cases}
    \semantics{\fixCmd}_\state &  \sigma = \unify(\origCmd,\bar{s}_{cmd}) \cup \unify(\errMsg,\bar{s}_{err}) \wedge \sigma \neq \bot \\
     \bot & \mbox {otherwise}
\end{cases}\\
\unify([\e_1,\cdots, \e_l],[s_1,\cdots, s_l]) & = &
\bigcup_{1 \leq j \leq l } \matchexpr(\e_j,s_j), \\
\matchexpr(\constStr(s_1), s_2) & = & \begin{cases}
    \state_0 &  \mbox{if } s_1=s_2\mbox{ and } \state_0 \mbox{ is the always undefined function}\\
     \bot & \mbox {otherwise}
\end{cases}\\
\matchexpr(\varmatch(i, \pref, \suf), s) & = & \begin{cases} \state & \mbox{ if } \exists \, \delta \mbox{ s.t. } s = \pref \delta \suf \mbox{ and } \state(i) = s \mbox{ and } \state \mbox{ is undefined on every }j\neq i \\
   \bot & \mbox {otherwise}
\end{cases}\\
\semanticsfix{[\g_1,\cdots, \g_n]}_\state & = & [\semanticsfix{\g_1}_\state,\cdots,\semanticsfix{\g_n}_\state]\\
    \semanticsfun{\fconstStr(s)}_\state & = & s \\
    \semanticsfun{\substrAppend(\pl,\pr,\pref,\suf,\varexpr(i))}_\state & = &
    \pref \cdot m \cdot \suf \qquad\mbox{if } (i,s)\in \state \mbox{ and } m=\substr(s,\semanticspos{p_L}_{s,L},  \semanticspos{p_R}_{s,R}) \\
    \semanticspos{\constPos(k)}_{s,d}  =  \begin{cases}
    k & k>0\\
    \length{s} + k & k<0\\
    0 & k=0 \wedge d=L\\
    \length{s} & k=0 \wedge d=R\\
    \end{cases}&&
    \semanticspos{\symPos(c,k,\off)}_{s,d}  =
    \begin{cases}
        \ind[k-1]+\off & k>0 \mbox{ and }  \ind=\indices(s,c) \\
                     & \mbox{and } \length{\ind}\geq k\\
        \ind[l-k]+\off & k<0 \mbox{ and }  \ind=\indices(s,c)\\
                     & \mbox{and } \length{\ind} + k\geq 0 \wedge l=\length{\ind}\\
        \bot & \mbox {otherwise}\\
    \end{cases}\\
\multicolumn{3}{c}{
    \indices(s,c) = [i_0,\ldots, i_j]  \qquad \mbox{ where }\forall ~0 \leq l < j: ~ i_l < i_{l+1}, s[i_l]=c \mbox{ and } \forall j: s[j]=c \rightarrow \exists ~0 \leq l < j: i_l=j
    }
    \end{array}
    \vspace{-3mm}
    \]
    \caption{The semantics of the command repair DSL $\langc$.
    \label{fig:dslsemantics}}
\end{figure*}
\paragraph{Fix expressions}
A fix expression $\g$ is either of the form $\fconstStr(s)$
denoting the constant output string $s$,
or of the form \linebreak
$\substrAppend(\pl,\pr,\pref,\suf,\varexpr(i))$ denoting
a function that is applied to the string $s_i$ matched by the variable $\varexpr(i)$.
This function outputs the string $\pref\cdot m\cdot \suf$, where $\cdot$ denotes the string concatenation operator and $m=\substr(s,j_L,j_R)$ 
where
$j_L$ and $j_R$ are the indices resulting from respectively evaluating the position expressions
$p_L$ and $p_R$ on the string $s$.
Here, given a string $s=a_0\ldots a_n$, the expression $\substr(s,j_L,j_R)$ denotes the string
$a_{j_L}\ldots a_{j_R-1}$ if $j_L,j_R\leq n+1$ and the undefined value $\bot$ otherwise. Notice that, unlike previous VSA-based languages~\cite{cacm12}, \langc does not allow binary recursive concatenation;
this is one of the key features that enables polynomial time synthesis.

\paragraph{Positions expressions}
A position expression $p$ can be one of the following types of expressions.
\begin{itemize}
\item A constant position expression $\constPos(k)$, which denotes the index $k$ if $k$ is positive
			and   the index $\length{s}-k$ if $k$ is negative.
			If $k=0$, this expression evaluates to $0$ when evaluated for $p_L$ (i.e., the starting index of the substring)
			and to $\length{s}$ when evaluated for $p_R$, where $\length{s}$ denotes the length of the string $s$.
			For example, in the function $\substrAppend(\constPos(0), \constPos(0),\varepsilon,\varepsilon,\varexpr(1))$,
			where $\state(1)=$~\texttt{File}
			the first constant position evaluates to the index $0$, while the second constant position
			evaluates to the index $\length{\texttt{File}}=4$.
\item A symbolic position expression $\symPos(c,k,\off)$, which denotes
			the result of applying an offset $\off$ to the index of the $k$-th occurrence of the character $c$
			in $s$ if $k$ is positive, and
			the result of applying an offset $\off$ to the index of the $k$-th to last occurrence of the character $c$
			in $s$ if $k$ is negative.
			For example, given the string \texttt{www.google.com},
			the expression $\symPos(\t{.},1,-2)$ denotes the index $2$ (two positions before the first dot),
			while
			the expression $\symPos(\t{.},-1,2)$ denotes the index $12$ (two positions after the last dot).
			This operator is novel and can express operations that are not supported by
			previous VSA-based work. In particular, FlashFill~\cite{cacm12} only allows 
			the extraction of the exact position of a character and not positions in its proximity. Despite this additional capability,
			\langc programs can be synthesized in polynomial time.
\end{itemize}

\paragraph{Comparison with FlashFill DSL}
At the top-level, $\langc$ consists of match expressions over original command and error message, which perform pattern-matching and unification of variables with strings. This form of matching and unification is not expressible in FlashFill, so we cannot use it to learn the fix rules directly. However, we can use FlashFill as a subroutine to learn string transformations corresponding to expressions similar to $\substrAppend$ expressions in \thefaq. However, the FlashFill DSL has two major limitations: 1) No support for offsets from regular expression matches in computing position expressions (in contrast to $\langc$'s $\symPos(c,k,\off)$ operator), and 2) A finite hard-coded token set for regular expressions (e.g. no support for constant character tokens).  Moreover, as described in \autoref{subsec:synthalg} and \autoref{sec:related}, our $\substrAppend$ operator yields a synthesis algorithm that operates in polynomial time in the number of examples, which enables the algorithm to scale to a large number of examples. Because of the recursive binary concatenate expressions in FlashFill, the DAG intersection based synthesis algorithm is exponential in the number of examples.
%
%
%

\section{Synthesizing rules from examples}
\label{sec:synthalgo}

\begin{figure*}[t]
\begin{subfigure}[t]{0.49\textwidth}
{
\begin{tabularx}{\textwidth}{@{\hspace{.5em}}>{\bfseries}l@{\hspace{.5em}}X@{\hspace{.5em}}}
cmd1:  & java Run.java\\
err1: & Could not find or load main class Run.java\\
fix1: & java Run\\
\end{tabularx}
}
\caption{First example. \label{fig:rulesynth:sub:example1}}
\end{subfigure}~
\begin{subfigure}[t]{0.49\textwidth}
{
\begin{tabularx}{\textwidth}{@{\hspace{.5em}}>{\bfseries}l@{\hspace{.5em}}X@{\hspace{.5em}}}
cmd2:  & java Meta.java \\
err2: & Could not find or load main class Meta.java \\
fix2: & java Meta\\
\end{tabularx}
}
\caption{Second example. \label{fig:rulesynth:sub:example2}}
\end{subfigure}\\\\\\
\begin{subfigure}[t]{0.49\textwidth}
{
\footnotesize
\begin{tabularx}{\textwidth}{l}
\tmatch~ [\strtt{java}, \constStr(Run.java)]\\
\tand~ [\strtt{Could}, \strtt{not}, \strtt{find}, \strtt{or}, \strtt{load},\\
\qquad~ \strtt{main}, \strtt{class}, \strtt{Run.java}]\\
\tsapply~
[\fconstStr(java), ~\fconstStr(Run)]\\
\\
\\
\end{tabularx}
}
\caption{Symbolic rule representation synthesized after first example. \label{fig:rulesynth:sub:rule1}}
\end{subfigure}~
\begin{subfigure}[t]{0.49\textwidth}
{
\footnotesize
\begin{tabularx}{\textwidth}{l}
\tmatch~ [\strtt{java}, $\varmatch(1,\, \varepsilon,\, \texttt{.java})$]\\
\tand~ [\strtt{Could}, \strtt{not}, \strtt{find}, \strtt{or}, \strtt{load},\\
\qquad~ \strtt{main}, \strtt{class}, $\varmatch(2,\, \varepsilon,\, \texttt{.java})$]\\
\tsapply~
[\fconstStr(java),~$\left \{\hspace{-1mm}
  \begin{tabular}{l}
  $\substrAppend(\constPos(0),\constPos(-5), \varepsilon, \varepsilon,\varexpr(1))])$\\
  $\substrAppend(\constPos(0),\symPos(\texttt{.},1,0), \varepsilon, \varepsilon,\varexpr(1))])$\\
  $\substrAppend(\constPos(0),\symPos(\texttt{.},-1,0), \varepsilon, \varepsilon,\varexpr(1))])$
  \end{tabular}\hspace{-1mm}
\right \}$]\\
\end{tabularx}
}
\caption{Symbolic rule representation synthesized after both examples. \label{fig:rulesynth:sub:rule2}}
\end{subfigure}
\caption{
Two input examples $e_1$ and $e_2$ and
symbolic rules synthesized after processing $e_1$ and $e_2$. \label{fig:rulesynth}
}
\end{figure*}

In this section we first describe our algorithm for synthesizing a single
  $\langc$ rule from a set of examples of concrete command fixes. We then describe a multi-stage partitioning algorithm for learning multiple $\langc$ programs from a large undifferentiated set of command repair examples.

The algorithm for learning a single $\langc$ rule
is described in \figref{synthalg}; it
takes as input a list of examples $E=[e_1, \ldots, e_n]$ where each example $e_i$ is a
triple of the
form $(\scmd, \serr, \sfix)$ and outputs a symbolically represented set of $\langc$ rules $R$ consistent with $E$---i.e., for every rule $r\in R$ and example $e_i=(\scmd, \serr, \sfix)$, the rule $r$ outputs $\sfix$ on the input $(\scmd, \serr)$.

The algorithm processes one input example at a time, and after processing the first $i$ examples
$E_i=[e_1,\ldots,e_i]$
the algorithm has computed a set of rules $R_i$ consistent with $E_i$.
At the end, the algorithm outputs one of the rules in $R_n$.
We use $\bot$ to denote the undefined result. If at any point
our algorithm returns $\bot$ it means that there is no $\langc$ rule
that is consistent with the given set of examples.

\subsection{Symbolic representation of multiple rules}
Since there can be exponentially many rules consistent with the input examples,
we adopt a symbolic representation of the set $R$ that is guaranteed to always
have polynomial size.
Our synthesis algorithm takes as input a list of examples $E$
and produces
as output a symbolic rule of the form
$\sfixRule{cmd}{err}{\symfix}$, where
$cmd$ and $err$ are tuples of expressions that can consist of either constants or variables,
and $\symfix=[\g_1,\ldots, \g_m]$ is a list of expressions that symbolically represents a set of outputs  
that is consistent
with the examples $E$.
Formally, each $\g_i$ in $\symfix$ is either 
a constant expression
$\fconstStr(s)$ for some $s$, or a set of substring
expressions
$\{su_1,\ldots, su_k\}$,
where each $su_i$  is of the form
$\substrAppend(\pl,\pr,\pref,\suf,\varexpr(j))$.
Intuitively, if we replace each set with one of the fix expressions it contains, we obtain
a $\langc$ rule.
If each $\g_i$ contains $k$ elements, this symbolic representation
models $k^n$ programs using an expression of size $kn$.

\subsection{Lazy rule representation}
The core element of our algorithm is a lazy representation of the rules
that represents
match and fix expressions as constants for as long as possible---i.e.,
until a new example shows that some parts of the rule cannot be constants. This helps reduce the number of variable expressions, which in turn reduces  the number of substring expressions to be considered. We first illustrate the idea with a concrete example.
Let's say that we are given the two examples shown in~\figref{rulesynth:sub:example1} and \ref{fig:rulesynth:sub:example2}.
After processing the first example, our algorithm synthesizes the $\langc$ rule in~\figref{rulesynth:sub:rule1}
in which every match expression and every fix expression is a constant.
However, since we have only seen one example, we do not yet know whether some expression appearing in
the match should actually be a variable match expression or whether some element
in the fix expression should actually be a function of some variable.
The main idea is that any of these possibilities can still be ``recovered''
when a new example shows that indeed a variable is needed.
Using this idea, we maintain
each expression as a constant until a new example shows that some expression
cannot actually be a constant.

This is exactly what happens when processing the input example in~\figref{rulesynth:sub:example2}.
At this point in order to find a rule that is consistent with both examples we need to introduce
a variable match as the second expression of the command match,
and some function application as the second element of the fix.
To do so, our algorithm applies the following operations to the previously computed rule.
\begin{enumerate}
\item All match expressions that cannot be constants are ``promoted'' to variable match expressions (making sure that
		all variable names are unique), which match on the longest shared prefix and suffix of all previously seen values at that position.
The following table illustrates the idea for the case in which we try to unify the command
\textbf{cmd2} in \figref{rulesynth:sub:example2}
with the matching part of the already computed rule in
\figref{rulesynth:sub:rule1}.

\begin{table}[!h]
\centering
\begin{tabular}{rcc}
rule:  & \strtt{java} & \strtt{Run.java}\\
new-ex: & java & Meta.java \\
\hline
new-rule: & \strtt{java} & $\varmatch(1, \varepsilon, \texttt{.java})$
\end{tabular}
\end{table}

\item All the fix expressions that cannot be constants are ``promoted'' to
		  \substrAppend~ expressions that are consistent with the current examples and
		  are allowed to use the variables appearing in the match expressions.		
\end{enumerate}

The second rule in~\figref{rulesynth}(d) reflects this update.
The figure also shows how multiple $\substrAppend$ expressions are represented symbolically as a set.
We describe all of these components in detail in the next section.

\subsection{Synthesis algorithm}
\label{subsec:synthalg}
\begin{figure*}[t]
\begin{subfigure}[t]{.49\textwidth}
{
\begin{algorithmic}
\fcomm{Rules consistent with input examples}
\Function{SynthRules}{$[e_0,\ldots,e_n]$}
	\State $r \gets \Call{ConstRule}{e_0}$
    \For{$1\leq i  \leq n$} \Comment{refine on each example $e_i$}
    	\State $r \gets \Call{RefineRule}{r,[e_0,\ldots, e_{i-1}],e_i}$
    \EndFor
    \State \Return $r$
\EndFunction
\fcomm{Constant rule consistent with one example}
\Function{ConstRule}{$[s^c_1,\ldots, s^c_n],[s^e_1,\ldots, s^e_m],[s^f_1,\ldots, s^f_l]$}
	\State $cst_{cmd} \gets [\constStr(s^c_1),\ldots,\constStr(s^c_n)]$
    \State $cst_{err} \gets [\constStr(s^e_1),\ldots,\constStr(s^e_m)]$
    \State $cst_{fix} \gets [\fconstStr(s^f_1),\ldots,\fconstStr(s^f_l)]$
    \State \Return ($\sfixRule{cst_{cmd}}{cst_{err}}{cst_{fix}}$)
\EndFunction
\fcomm{Refines a rule to make it consistent with one more example}
\Function{RefineRule}{$r,E,(\scmd,\serr,\sfix)$}
	\State $r \equiv (\sfixRule{cmd}{err}{\symfix})$
	\State $(cmd',V_c) \gets \Call{FindVariables}{\scmd,cmd,0}$
	\State $(err',V_e) \gets \Call{FindVariables}{\serr,err,\length{\scmd}}$		
	\State $V\gets V_c\cup V_e$
	\State $E' \gets (\scmd,\serr,\sfix):: E$
    \State $\symfix' \gets \Call{SynthFix}{\sfix,\symfix,E',V}$
    \State \Return ($\sfixRule{cmd'}{err'}{\symfix'}$)
\EndFunction

\fcomm{Finds variables necessary to match example}
\Function{FindVariables}{$[s_1,\ldots, s_n],[t_1,\ldots, t_m],o$}
	\If {$n\neq m$}\Comment{Input length same as match length?}
		\State \Return $\bot$
	\EndIf
	\State $(m,V) \gets ([], \emptyset)$
	\For{$1\leq i  \leq n$}
 		   \Case{$t_i=\constStr(s)\wedge s= s_i$}
		      	\State $(m,V) \gets (m @ \constStr(s_i), V)$
 		   \EndCase
 		   \Case{$t_i=\constStr(s)\wedge s\neq s_i$}
 		        \State $pref \gets \Call{LongestCommonPrefix}{s, s_i}$
                        \State $suf \gets \Call{LongestCommonSuffix}{s, s_i}$
 		        \State $newId \gets i+o$ \Comment{Create new variable}
                        \State $m \gets  m@[\varmatch(newId,\, pref,\, suf)]$
 		        \State $V \gets V\cup \{newId\}$
 		   \EndCase
 		   \Case{$t_i=\varmatch(j, l, r)$}
                        \State $pref \gets \Call{LongestCommonPrefix}{s, l}$
                        \State $suf \gets \Call{LongestCommonSuffix}{s, r}$
                        \State $m \gets m@[\varmatch(j,\, pref,\, suf)]$
                        \State $V \gets V\cup \{j\}$
 		   \EndCase
         \EndFor
    \State \Return $(m,V)$
\EndFunction    
    
\end{algorithmic}
}
\end{subfigure}
~~~~
\begin{subfigure}[t]{.49\textwidth}
{
\begin{algorithmic}
\fcomm{Outputs the fixes consistent with
				all the examples $E$ and such that $\substrAppend$ expressions
				can depend on any variable in $V$.
				The fix component of the latest example and the fixes computed on the previous examples are also
				passed as input}
\Function{SynthFix}{$[s_1,\ldots, s_n],[t_1,\ldots, t_n],e::E,V$}
	\If {$n\neq m$}
		\State \Return $\bot$
	\EndIf
	\State $f \gets []$
	\For{$1\leq i  \leq n$}
 		   \If{$t_i=\fconstStr(s)\wedge s= s_i$}
		      	\State $f \gets f@[\fconstStr(s_i)]$
 		   \Else
 		   		\State $f \gets f@[\Call{SynthSubstrings}{e::E,V,i}]$
 		   \EndIf
    \EndFor
    \Return $f$
  \EndFunction

\fcomm{Outputs all $\substrAppend$ expressions consistent with
				all the examples that can appear
				at position $i$ in the fix expression}
\Function{SynthSubstrings}{$e::E,V,i$}
	\State $F \gets \Call{AllSubstrings}{e,V,i}$
	\ForAll{$(\scmd,\serr,\sfix)\in E$}
			\State $F'\gets \emptyset$
			\ForAll{$fun\in F$}
				\State $\textbf{let }\substrAppend(\pl,\pr,\pref,\suf,\varexpr(j))=fun$
				\State $o\gets \length{\scmd}$
				\If {$j<o$} \Comment{The variable is in $\scmd$}
					\If {$\Call{eval}{fun,\scmd[j]=\sfix}$}
						\State $F'\gets fun::F'$
					\EndIf								
				\Else	 \Comment{The variable is in $\serr$}
					\If {$\Call{eval}{fun,\serr[j-o]=\sfix}$}
						\State $F'\gets fun::F'$
					\EndIf		
				\EndIf	
			\EndFor
			\State $F \gets F'$
	\EndFor
	\State \Return $F$
\EndFunction

\fcomm{Outputs all $\substrAppend$ expressions consistent with
				one example that can appear
				at position $i$ in the fix expression}
\Function{AllSubstrings}{$(\scmd,\serr,\sfix),V,i$}
    \State \Return \mbox{all }$\substrAppend(\pl,\pr,\pref,\suf,\varexpr(j))$ \mbox{ that }
    \State ~~~~~\mbox{when evaluated on $\scmd$ and $\serr$ output $\sfix[i]$ }
    \State ~~~~~\mbox{and such that }$j\in V$.
\EndFunction
    
\end{algorithmic}
}
\end{subfigure}
\vspace{-2mm}
\caption{Algorithm for synthesizing $\langc$ rules from concrete examples.\label{fig:synthalg}}
\end{figure*}

Given a list of input examples, the function $\textsc{SynthRules}$ uses the first example and the function
\textsc{ConstRule}
to generate the symbolic rule composed only of constant operators.
It then iteratively refines the rule on the remaining examples as shown in Figure~\ref{fig:synthalg}.
This second operation is done by the function \textsc{RefineRule}
which takes as input a symbolic rule $r$, one new example
$(\scmd,\serr,\sfix)$, and the list of examples $E$
on which every concrete rule represented by $r$ behaves correctly.
\textsc{RefineRule} executes two main steps using the following functions.

\paragraph{\textnormal{\textsc{FindVariables}}} tries to unify the inputs $\scmd$ and $\serr$
with the corresponding match expressions $cmd$ and $err$ in the symbolic rule $r$ and
generates new variable match expressions if necessary---i.e., when $r$ contains
a matching expression $\constStr(s)$ but the corresponding component in the example
is a string different from $s$.
In this case, a $\varmatch(i,l,r)$ expression is generated such that
$i$ is a new variable name, and
$l$ and $r$ are the longest prefix and suffix shared by $s$, respectively. 

When \textsc{FindVariables} is presented with a new $\scmd$ or $\serr$ after a constant match expression has been `promoted' to a \linebreak $\varmatch(i,l,r)$ expression, the prefix and suffix are updated accordingly. \textsc{FindVariables} determines the longest prefix $r'$ and suffix $l'$ of $l$ and $r$, respectively, that is consistent with the appropriate component of the new example, and generates $\varmatch(i,l',r')$.

\paragraph{\textnormal{\textsc{SynthFix}}} uses the variables computed in the previous step
to synthesize all possible fix expressions that are consistent with
			the list of examples $\{(\scmd,\serr,\sfix)\}::E$.

In order to simplify variable naming and guarantee unique names, each variable has the index of the corresponding
element in the input---i.e., $\varexpr(i)$ denotes the $i$-th string in the list $\scmd @ \serr$ obtained by
concatenating the command and error input lists.

\paragraph{Lazy pattern matching}
The function \textsc{FindVariables}, given a rule $r$ and a new example $e$, iterates over the input components
of the new example $e$ and outputs the set of variables necessary to match this new example
with respect to the previously computed symbolic rule $r$.
The function \textsc{SynthFix}, given a rule $r$ and a list of examples $E$,
individually synthesizes all the components $f_i$
of the symbolic output fix expression that are consistent with $E$.
\textsc{SynthFix} is incremental in the sense that it tries to minimally change the original fix expression of $r$:
\begin{itemize}
\item if the $i$-th component $t_i$ of the fix expression of $r$ is a constant string consistent with the new example,
			then
			it is left unchanged;
\item in any other case the output has to be a substring operation, and the
			function \textsc{SynthSubstrings} is used to compute all the possible
			$\substrAppend$
			expressions that are consistent with the set of examples $E$.
\end{itemize}
Given a list of examples $e::E$, the function \textsc{SynthSubstrings}
first synthesizes all the $\substrAppend$ expressions
that are consistent with $e$ using the function $\textsc{AllSubstrings}$ and
then runs each synthesized expression on examples in $E$ to remove
 the inconsistent ones.

\paragraph{\textup{ALLSUBSTRINGS}}
\figref{synthalg} omits the formal definition of the function $\textsc{AllSubstrings}$ due to  space limitations,
but we describe its main components.
Given  an example \linebreak $e=(\scmd,\serr,\sfix)$, a set of variable names $V$, and the index $i$ corresponding to the element of the output
sequence we are trying to synthesize,
$\textsc{AllSubstrings}$ computes the set of all substring expressions
of the form
$fun=\substrAppend(\pl,\pr,\pref,\suf,\varexpr(j))$ that are consistent with $e$ such that the result of applying $fun$ to the $j$-th string in $\scmd @\serr$ is the
$i$-th string in $\sfix$.
Let's assume that $\length{\scmd}+\length{\serr}=n_I$, $\length{\sfix}=n_O$,
and $n_L$ is the length of the longest string appearing in any of the three lists in the input example.
To compute the set $\textsc{AllSubstrings}(e,V,i)$ we iterate over all variable indices
and for each variable index $j\in V$ we do the following.
\begin{enumerate}
\item Extract the string $s_j$ corresponding to the variable $\varexpr(j)$ --- $\bigO{n_I}$ iterations.
\item For each string $s$ that is a substring of both $\sfix[i]$ and $s_j$,
			compute all possible pairs of indices $k_1,k_2$ such that $\substr(s_j,k_1,k_2)=s$ --- $\bigO{n_K^2}$ 
            possible substrings and $\bigO{n_K}$ possible ways to
				 place the substring in $\sfix[i]$.
\item For each $k_1$ (resp. $k_2$) compute every position expression
			$p_1$ (resp. $p_2$)
			such that evaluating $p_1$ (resp. $p_2$)
			 on $s_j$ produces the index $k_1$ (resp. $k_2$) --- $\bigO{n_K}$ possible positions.
\item For each of these possibilities yield the expression\linebreak
			$\substrAppend(p_1,p_2,l,r,\varexpr(j))$
			where $l$ and $r$ are such that $\sfix[i] = l \cdot \substr(s_j,k_1,k_2) \cdot r$.
\end{enumerate}
\textsc{AllSubstrings} produces a set of expressions that in the worst case has size
$\bigO{n_In_K^5}$\footnote{
Note that the efficiency of this implementation is contingent on our specific choice of data structure and algorithms. A more naive solution, based on set intersection (like that of the DAG-based algorithm in FlashFill) may experience exponential blow-up in the number of examples, due to the quadratic nature of the intersection operations.
}.
If we restrict the offset component $\off$ to only range over the values $\{-1,0,1\}$ for the symbolic expressions $\symPos(c,i,\off)$, the size reduces to $\bigO{n_In_K^3}$, and the synthesis algorithm is still sound and complete for this fragment of $\langc$.

This last restriction of the language can capture all the rules we are interested in.
Notice that this analysis still holds in the extreme case in which
all input matches are variable expressions of the form $\varmatch(i, \varepsilon, \varepsilon)$.
In our experiments on real-world commands, worst-case performance is uncommon, and is induced by substring operations over heterogeneous strings which yield many possible implementations. Consider the following two examples.

\begin{table}[H]
\begin{tabularx}{\columnwidth}{|@{\hspace{.5em}}>{\bfseries}l@{\hspace{.5em}}X@{\hspace{.5em}}|@{\hspace{.5em}}>{\bfseries}l@{\hspace{.5em}}X@{\hspace{.5em}}|}
\hline
cmd1:  & aaaa aaaa & cmd2: & bbbb bbbb\\
err1:  & aaaa aaaa & err2: & bbbb bbbb\\
fix1:  & aa        & fix1: & bb\\
\hline
\end{tabularx}
\end{table}
Performing synthesis on this pair of examples yields a pattern match consisting of four $\varmatch(i,\epsilon,\epsilon)$ expressions. Due to the uniformity of the input strings, synthesis yields $48$ possible $\substrAppend$ expressions. In particular, the desired fix can be generated from any of the four strings in the supplied $s_{cmd}$ and $s_{err}$. Each of the four strings has three substrings of length $2$, any of which yields the desired output. For each such substring, there are four pairs of $\constPos$ values that supply the appropriate indices: The pair with two positive indices, the pair with two negative indices, and the two pairs consisting of one positive and one negative index.

\paragraph{Key point}
At this point we are ready to explain why all the match expressions can be kept as constants
for as long as possible.
If after processing a set of examples $E$, some expression in $cmd$ or $err$
is of the form $\constStr(s_i)$, then, for every input example,
the value of the $i$-th component is the string $s_i$.
Therefore, even if we replace this expression with a variable,
all its instantiations will have the same values.
Consequently, every function of the form  $\substrAppend(p_1,p_2,l,r,\varexpr(i))$
will produce a constant output, making it equivalent to the some
constant function $\fconstStr(s')$. 

\subsection{Partition-Based Synthesis} \label{partsynth}
\langc can learn repair rules from a set of examples corresponding to a specific incorrect use of a command. In practice, however, it may be difficult to present \langc with a collection of neatly curated sets of examples, from each of which, \langc learns a single symbolic rule. Such a process is both labor-intensive and error-prone. We instead envision large-scale learning of \langc rules from undifferentiated sets of examples submitted by many users. To facilitate this, we propose a simple multi-stage partitioning strategy.
As a consequence of \langc's structure, each symbolic rule matches on a pair of command and error strings, each of which has a fixed number of tokens. Upon a match, \langc generates a repaired command with a fixed number of tokens. Conversely, every symbolic \langc expression must be learned from a set $S_E$ of example triples of the form $(s_{cmd}, s_{err}, s_{fix})$ for which the lengths of $s_{cmd}$, $s_{err}$, and $s_{fix}$ do not vary.

Given an undifferentiated set of examples, $S_E$, we partition $S_E$ into $n$ disjoint subsets $\delta_i$ where $S_E = \bigcup_{i=0}^{n} \delta_i$. For every $\delta_i$ the property $\forall (s_{cmd},s_{err},s_{fix}) \in \delta_i, \,\,  (\scmd, \serr,\sfix) \in \delta_i .\, |s_{cmd}| = |\scmd|\, \wedge \, |s_{err}| = |\serr|\, \wedge\, |s_{fix}| = |\sfix|$. This divides $S_E$ into subsets from which it is possible to synthesize \langc rules.

After dividing $S_E$, it may still be the case that individual sets $\delta_i$ contain examples representing distinct command repair rules which share the same triple of $s_{cmd}$, $s_{err}$, and $s_{fix}$ lengths. At this point, we attempt to find the smallest set of rules that can be synthesized from the examples in $\delta_i$.

The search ranges over all partitions $P$ of $\delta_i$, where $P$ is the set $\{\delta_{i,1}, \delta_{i,2}, \ldots, \delta_{i,m}\}$ such that $\bigcup_{j=1}^{m} \delta_{i,j} = \delta_i$ and $\forall j \neq k, \delta_{i,j} \cap \delta_{i,k} = \varnothing$. We enumerate the partitions in ascending order of size $m$, starting with $P = \{\delta_i\}$, and ending with the partition consisting entirely of singleton sets. Given a partition $P$, we attempt to synthesize a symbolic \langc rule for each set in $P$, stopping when we have generated a rule for each $\delta_{i,j}$. As we will show in section~\ref{sec:evaluation}, this expensive procedure is only feasible when using our novel contribution of lazy VSA.

\paragraph{Example partitioning}

\begin{figure}[!t]
\begin{tabularx}{\columnwidth}{r@{\hspace{.5em}}>{\bfseries}l@{\hspace{.5em}}X@{\hspace{.5em}}}

&cmd1: & java Run.java \\
$e_1$ & err1: & Could not find or load main class Run.java\\
&fix1: & java Run\\
\hline
&cmd2: & java Test.java \\
$e_2$ & err2: & Could not find or load main class Test.java\\
&fix2: & java Test\\
\hline
&cmd3: & composer pkg\\
$e_3$ & err3: & did you mean one of these? pkg1 pkg2\\
&fix3: & composer pkg1\\
\hline
& cmd4: & composer hptt\\
$e_4$ & err4: & did you mean one of these? http html\\
& fix4: & composer http\\
\hline
& cmd5: & mv photo.jpg Mary/summer12.jpg \\
$e_5$ & err5: & can't rename `photo.jpg': No such file or directory\\
& fix5: & mkdir Mary \texttt{\&\&} mv photo.jpg Mary/summer12.jpg\\
\hline
& cmd6: & mv dec31.jpg Bob/family.jpg \\
$e_6$ & err6: & can't rename `dec31.jpg': No such file or directory\\
& fix6: & mkdir Bob \texttt{\&\&} mv dec31.jpg Bob/family.jpg\\
\end{tabularx}
\caption{Examples requiring more than one \langc rule. \label{fig:multiruleex}}
\end{figure}
Consider the example set shown in Figure~\ref{fig:multiruleex}.
The first two examples in this set correspond to the repair in Section~\ref{sec:motiv:adding} 
while the last two correspond to the repair in Section~\ref{complexsubs}. 
The third and fourth example correspond to the rule that outputs \texttt{composer}, followed by the token at index $8$ in the input 
(i.e., the first token suggested by the error message).

We first group the examples based on the length of their components. 
For the first four examples, we have $|s_{cmd}| = 2$, $|s_{err}| = 8$, $|s_{fix}| = 2$
and for the two remaining examples, 
$|s_{cmd}| = 3$, $|s_{err}| = 8$, $|s_{fix}| = 6$. Thus, we obtain
two groups $S_1 = \{e_1, e_2, e_3, e_4\}$ and $S_2 = \{e_5, e_6\}$.

While there exists a \langc program that is consistent with the examples in the set $S_2$,
no \langc program can describe a transformation that is consistent with all the examples in $S_1$.
We therefore proceed by iteratively partitioning the set $S_i$, attempting to find a partition $P_i$ for which we can synthesize a rule for every $S_{i,j} \in P_i$. For $S_2$, we can clearly do so for the initial partition $P_2 = \{ S_2 \}$, yielding the rule from Section~\ref{sec:motiv:extractingsubs}. For $S_1$, the first partition for which we can generate a \langc rule for every element is \linebreak $P_1 = \{ S_{1,1} = \{e_1, e_2\}, \, S_{1,2} = \{e_3, e_4\} \}$. $S_1$ yields the rule from Section~\ref{sec:motiv:adding}, and $S_2$ yields the simple substitution rule described previously.

\subsection{More Succinct Representation}
The version of symbolic rule we presented is already able to store exponentially many
concrete $\langc$ rules in polynomial space.
In this section, we discuss further improvements that can make the representation
more succinct.

\paragraph{Avoid redundancy}
In the set of fix expressions enumerated by the function \textsc{AllSubstrings}, the last three components of the
the expression $\substrAppend(p_1,p_2,l,r,\varexpr(j))$ are often repeated many
times.
Looking at \figref{rulesynth:sub:rule2} we can see how
all the synthesized functions have $l=r=\varepsilon$ and are applied to
the variable $\varexpr(1)$.
We define a data structure for representing sets of fix expressions that avoids these repetitions.
A set of fix expressions is represented symbolically 
using a partial function
$$
d: \mathbb{N} \mapsto (\Sigma^* \times \Sigma^*) \mapsto Set(P\times P)
$$
where $P$ is the set of all position expressions.
Formally, given a variable index $i$ and
two strings $l$ and $r$,
the set $d(i,l,r)$ symbolically represents
the set of
fix expressions \linebreak $\{\substrAppend(p_1,p_2,l,r,\varexpr(i))\mid (p_1,p_2)\in d(i,l,r)\}$.
The function $d$ can be efficiently implemented and avoids redundancy.
Considering again the example rule in
\figref{rulesynth:sub:rule2},
all the fix expressions in the second component of the output can be succinctly represented by the function
$d$ that is only defined
on the input $(1,\varepsilon,\varepsilon)$ and such that
$$
\begin{array}{rll}
d(1,\varepsilon,\varepsilon) = \{&
	(\constPos(0),\constPos(-5)),\\
	&(\constPos(0),\symPos(\texttt{.},1,0)),\\
	&(\constPos(0),\symPos(\texttt{.},-1,0))&\}.\\
\end{array}	
$$

\paragraph{Avoid example representation}
Each synthesized rule is
in some sense coupled to the set of examples used to synthesize it.
We present a data structure that only keeps track of the important ``parts'' of the input examples
and therefore allows us to discard each example after it has been processed.

We modify the symbolic rule representation as follows.
Given a set of examples $(\scmd^1,\serr^1,\sfix^1)\ldots (\scmd^n,\serr^n,\sfix^n)$,
\begin{itemize}
\item every variable $\varmatch(i,l,r)$ in the match component becomes
a pair $(\varmatch(i, l, r),[b_1,\ldots,b_n])$ where the second component is the list of strings that binds to
$\varexpr(i)$ in the input components of the examples---i.e. $b_j=(\scmd^j\serr^j)[i]$;
\item every set of fix expressions represented by a function $d_i$
and corresponding to the $i$-th
component of the fix expression becomes a pair
$(d_i, [b_1,\ldots,b_n])$ where the second component is the list of strings that appear
in position  $i$ in the output components of the examples---i.e. $b_j=\sfix^j[i]$;
\end{itemize}
Using this data structure we do not need to store examples as we can always re-infer
them from the symbolic rule representation.

\subsection{Concrete outputs}
Taking into account the updated data structures,
the algorithm $\textsc{SynthRules}$ returns
a symbolic rule $r$
of the form \linebreak $\sfixRule{cmd}{err}{\symfix}$
where $cmd=[c_1,\ldots,c_n]$ and $err=[e_1,\ldots,e_m]$ are lists of expressions of the form
$\constStr(s)$ or $(\varexpr(i),B)$, while $\symfix=[f_1,\ldots,f_l]$ is a list
of expressions of the form
$\fconstStr(s)$ or $(d,B)$ where $d$ is the
data structure for representing multiple fix
expressions.
The set of concrete $\langc$ rules induced by this symbolic representation
is the following.
$$
\begin{array}{l}
\conc(\sfixRule{cmd}{err}{fix}) =\\
\qquad\qquad \{\fixRule{cmd}{err}{f}\mid  f\in \conc(fix)\}\\
\conc([f_1,\ldots, f_l]) =\{[f_1',\ldots,f_l']\mid f_i'\in \conc(f_i)\}\\
\conc(\fconstStr(s)) =\{\fconstStr(s)\}\\
\conc(d,B) =\{\substrAppend(p_1,p_2,l,r,\varexpr(i)) \mid\\
\qquad \qquad\qquad \qquad  \qquad \qquad  \exists i,l,r.(p_1,p_2)\in d(i,l,r)\}.
\end{array}
$$

\section{Formal properties}
\label{sec:algoproperties}

We study the formal properties of the synthesis algorithm and of the language
$\langc$. 
These specific properties describe the behavior of the synthesis algorithm in the absence of the partitioning strategy described in Section~\ref{partsynth}.
\iffull
\else 
The proofs of the theorems presented in this section are available in the supplementary material.
\fi
\paragraph{Properties of the synthesis algorithm}
First, our synthesis algorithm is invariant with respect to the order
in which the training examples are presented. Thus, the properties of a symbolic rule generated by $\textsc{SynthRules}$, can be discussed solely in terms of the \textit{set} of examples provided to $\textsc{SynthRules}$.
\begin{theorem}[Order invariance]
Given a list of examples $E$, for every permutation of examples $E'$ of $E$, we have \linebreak
$\conc(\textsc{SynthRules}(E))=\conc(\textsc{SynthRules}(E'))$.
\end{theorem}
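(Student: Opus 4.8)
\begin{proofsketch}
The plan is to show that, for any list of examples $E$, the object returned by $\textsc{SynthRules}(E)$ is determined entirely by the \emph{set} of examples occurring in $E$: either every permutation makes the algorithm return $\bot$, or every permutation makes it return a symbolic rule with the same concretization $\conc(\cdot)$. Since two lists that are permutations of one another have the same underlying set, this yields the claim at once.

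The technical heart is a loop invariant for the \textbf{for} loop in $\textsc{SynthRules}$. I would prove by induction on the number $k$ of examples processed that, after refining on some set $S$ of $k$ examples, the current object is $\bot$ whenever $S$ contains two examples whose command lists, error lists, or fix lists have differing lengths, and is otherwise the \emph{canonical} symbolic rule $\rho(S)$ defined position-by-position thus: a match position $i$ (in $cmd$ or $err$) is $\constStr(s)$ if every example of $S$ has string $s$ at position $i$, and is otherwise $\varmatch(i,l,r)$ where $l$ and $r$ are the longest common prefix and suffix of the strings appearing at position $i$ across $S$; a fix position $i$ is $\fconstStr(s)$ if every example of $S$ has $s$ at output position $i$, and is otherwise the set of all $\substrAppend(\pl,\pr,\pref,\suf,\varexpr(j))$ expressions whose variable $j$ is a variable position of $\rho(S)$ and which output the correct token at position $i$ on every example of $S$. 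Because this description mentions only $S$, establishing the invariant finishes the proof: $\rho$ is manifestly invariant under permutations of $E$, and $\conc$ only reads off the represented set.

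The inductive step is to check that one call to $\textsc{RefineRule}$ maps $\rho(S)$ to $\rho(S\cup\{e\})$ (or to $\bot$ exactly when adding $e$ introduces a length mismatch). For the match positions this is where I would invoke that longest-common-prefix and longest-common-suffix are commutative and associative as aggregations over a finite set, so that the incremental updates made by $\textsc{FindVariables}$ — promoting a $\constStr$ to a $\varmatch$, or shrinking an existing $\varmatch(i,l,r)$ against the new token — coincide with recomputing $l,r$ over all of $S\cup\{e\}$ at once; a position stays constant precisely when all examples still agree, which is a property of $S\cup\{e\}$ alone, and the index set $V$ returned is exactly the set of variable positions of $\rho(S\cup\{e\})$. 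For the fix positions the key point is that $\textsc{SynthFix}$, in its non-constant branch, discards the previous expression set entirely and re-invokes $\textsc{SynthSubstrings}$ on the full current example list, which — by the definition of $\textsc{AllSubstrings}$ together with the subsequent filtering — returns exactly the set of $\substrAppend$ expressions over variables in $V$ consistent with \emph{every} current example; and a fix position remains $\fconstStr(s)$ iff all examples seen so far agree on that token, again a property of $S\cup\{e\}$ alone. The base case $S=\{e_0\}$ is immediate, since $\textsc{ConstRule}$ emits all-constant match and fix components and a singleton set trivially satisfies every "all examples agree" condition.

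The main obstacle is not any single computation but making the "fresh recomputation" argument airtight: one must confirm that no state beyond the constant/non-constant \emph{status} of each position is carried between iterations in an order-sensitive way — in particular that $\textsc{SynthSubstrings}$ never reuses a stale expression set and that the variable set it is handed is the order-independent one — and one must track the $\bot$ propagation out of the length checks in $\textsc{FindVariables}$ and $\textsc{SynthFix}$, arguing that the family of example sets exhibiting a length disagreement is order-independent and closed under adding further examples. Once the invariant is in hand the theorem follows with no further work; the same invariant in fact shows the stronger statement that $\textsc{SynthRules}(E)=\textsc{SynthRules}(E')$ on the nose, provided the concrete representation of expression sets is itself order-insensitive.
\end{proofsketch}
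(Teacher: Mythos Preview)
Your proposal is correct and follows essentially the same approach as the paper. The paper packages your canonical object $\rho(S)$ as a predicate $CompR(r,E)$ (Definitions~\ref{def:commandcomp}--\ref{def:completerule} in the appendix) and then proves via per-subroutine lemmas that $\textsc{SynthRules}(E)$ satisfies $CompR(\cdot,E)$; since $CompR$ depends only on the underlying set of examples, order invariance follows. Your loop-invariant formulation with an explicit $\rho(S)$ is a slightly more direct presentation of the same argument, relying on the same two key observations: longest-common-prefix/suffix are order-insensitive aggregations, and $\textsc{SynthSubstrings}$ recomputes from scratch over the full example list at every iteration.
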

\begin{proofsketch}
  Consider a list of examples $E$. If two examples differ at the $i$th position in their respective commands, the $i$th expression in $cmd$ will be promoted to a $\varmatch$, regardless of the order in which they are presented to $\textsc{SynthRules}$. Moreover, if all $|E|$ strings at the $i$th position in the commands share a prefix or suffix, reordering $E$ does not change this fact. Thus, the discovered $\varmatch$ expressions will not vary based on order. The same holds for $err$ and the ``promotion'' of constants in $\textsc{SynthFix}$.
  
Moreover, $\textsc{SynthSubstring}$ starts from scratch at each iteration of the loop in $\textsc{SynthRules}$ and $fix$ depends only on the output of $\textsc{SynthFix}$ in the final iteration. Since the variable set does not vary based on the ordering of $E$, the final invocation of $\textsc{SynthFix}$ does not depend on the ordering of $E$.\qed
\end{proofsketch}
Second, the synthesis algorithm produces only rules that are consistent
with the input examples. If we select an arbitrary concrete rule $r$ from the set specified by a symbolic rule generated by $\textsc{SynthRules}$, and run it on the command and error of any of the examples provided to $\textsc{SynthRules}$ for the synthesis of $r$, we will obtain the fix originally provided in that example.
\begin{theorem}[Soundness]
Given a list of examples $E$, for every rule $r\in\conc(\textsc{SynthRules}(E))$
and for every example \linebreak $(\scmd,\serr,\sfix)\in E$,
$\semantics{r}(\scmd,\serr)=\sfix$.
\end{theorem}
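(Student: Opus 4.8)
\medskip

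\noindent
The plan is to proceed by induction on the number of examples processed by \textsc{SynthRules}, strengthening the statement to an invariant that every concrete rule extracted from the symbolic rule $r_i$ produced after processing $E_i=[e_0,\ldots,e_i]$ is consistent with all of $E_i$. The base case is \textsc{ConstRule}($e_0$): the returned symbolic rule has only $\constStr$/$\fconstStr$ components, so $\conc$ of it is the singleton rule that matches exactly $\scmd^0$ and $\serr^0$ and outputs exactly $\sfix^0$; running the semantics of \figref{dslsemantics} on $(\scmd^0,\serr^0)$ gives $\sfix^0$ by a direct unfolding of $\unify$, $\matchexpr$ on constants, and $\semanticsfun{\fconstStr(s)}$. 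For the inductive step I would analyze \textsc{RefineRule}, which does two things: it calls \textsc{FindVariables} to update the match components, and it calls \textsc{SynthFix} to update the fix components. I need: (a) the new match components still unify successfully with every $(\scmd^j,\serr^j)$ for $j\le i$, producing a state $\sigma_j$, and (b) every concrete fix in $\conc(\symfix')$, evaluated under $\sigma_j$, yields $\sfix^j$.

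\medskip

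\noindent
For (a), the key lemma is about \textsc{FindVariables}: if a position was already $\constStr(s)$ and the new example also has $s$ there, it stays constant and matches trivially; if the new example differs, it is promoted to $\varmatch(newId, pref, suf)$ where $pref,suf$ are the longest common prefix/suffix of the two strings seen so far, and by the induction hypothesis the only prior values at that position were all equal to $s$ (since it was still constant), so $pref$ is a prefix and $suf$ a suffix of every prior value and of the new one; hence $\matchexpr(\varmatch(newId,pref,suf),\cdot)$ succeeds on all of them and binds the variable to the full string at that position. The already-variable case is analogous, using that \textsc{LongestCommonPrefix}/\textsc{Suffix} only shrink the stored $l,r$ and the invariant that the stored $l,r$ were prefixes/suffixes of all previously bound strings. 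I would also note that \textsc{FindVariables} returns $\bot$ (and hence \textsc{RefineRule} fails) exactly when the lengths disagree, so whenever the algorithm returns a non-$\bot$ rule, lengths match and $\unify$ is total on each example, and moreover $\sigma_j(k)$ equals $(\scmd^j\serr^j)[k]$ for every variable index $k$ by construction of $\matchexpr$ on $\varmatch$.

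\medskip

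\noindent
For (b), the argument rests on the specification of \textsc{SynthFix}/\textsc{SynthSubstrings}/\textsc{AllSubstrings}. If component $i$ of the fix stays $\fconstStr(s)$, then \textsc{SynthFix} only keeps it when $s=\sfix^j[i]$ for the new example, and by IH it equaled $\sfix^j[i]$ for all earlier $j$; so its evaluation is correct on every example. Otherwise component $i$ becomes the set returned by \textsc{SynthSubstrings}$(e_i::E,V,i)$, and by the stated contract of \textsc{AllSubstrings} every $fun=\substrAppend(p_L,p_R,l,r,\varexpr(j))$ in \textsc{AllSubstrings}$(e_k,V,i)$ satisfies $\semanticsfun{fun}_{\sigma_k}=\sfix^k[i]$ for that one example, and \textsc{SynthSubstrings} then filters with \textsc{eval} so that only those $fun$ surviving on every example in $E$ are kept — and since we verified in (a) that $\sigma_k(j)=(\scmd^k\serr^k)[j]$, the \textsc{eval} check ($fun$ applied to the $j$-th input string equals $\sfix^k[i]$) is exactly the condition $\semanticsfun{fun}_{\sigma_k}=\sfix^k[i]$. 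Finally $\conc$ of the symbolic rule picks, for each position $i$, one element of that surviving set (or the constant), so by $\semanticsfix{[\g_1,\ldots,\g_n]}_\sigma=[\semanticsfix{\g_1}_\sigma,\ldots]$ the whole rule outputs $\sfix^k$ on $(\scmd^k,\serr^k)$. Taking $i=n$ gives the theorem.

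\medskip

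\noindent
The main obstacle is getting the \textsc{FindVariables} invariant exactly right across the two distinct promotion cases (constant$\to$variable and variable$\to$refined-variable) simultaneously with the indexing bookkeeping — namely that the variable name $newId=i+o$ always coincides with the index of that token in the concatenated list $\scmd@\serr$, so that the $j<o$ versus $j\ge o$ dispatch in \textsc{SynthSubstrings} reads the right string and the \textsc{eval} check matches $\semanticsfun{\cdot}$. Everything else is a routine unfolding of the semantics in \figref{dslsemantics} against the algorithm in \figref{synthalg}; the only genuine content is maintaining the "stored prefix/suffix is a common prefix/suffix of all bound strings so far" invariant, which is immediate from the definitions of \textsc{LongestCommonPrefix}/\textsc{Suffix}, and relying on the (deferred) correctness specification of \textsc{AllSubstrings} as a black box.
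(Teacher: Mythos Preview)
Your proposal is correct and follows essentially the same approach as the paper: induction on the number of examples, with the inductive step split into showing that \textsc{FindVariables} maintains a match that unifies with every earlier example (via the longest-common-prefix/suffix invariant) and that \textsc{SynthFix}/\textsc{SynthSubstrings} retain only fix expressions consistent with every example seen so far. The paper packages the same invariant into an explicit predicate $CompR(r,E)$ (which also carries the completeness direction needed for Theorem~3) and proves a separate lemma for each helper function, but the underlying argument is identical to yours.
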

\begin{proofsketch}
  The repeated applications of $\textsc{FindVariables}$ will promote any $\constStr(s)$ expression if a new example does not match on $s$. Moreover, when refining a $\varmatch$, $\textsc{FindVariables}$ chooses the longest prefix and suffix consistent with all examples seen so far. Thus, $cmd$ and $err$ will correctly match on all examples. The soundness of the resulting $fix$ derives from the fact that at each iteration of the loop in $\textsc{SynthRules}$, the invocation of $\textsc{SynthSubstring}$ in $\textsc{SynthFix}$ takes into account all examples seen in previous iterations of the loop. Moreover, each invocation begins with the set of all possible $\substrAppend$ expressions, and prunes those inconsistent with any example seen so far.\qed
\end{proofsketch}
Since parts of the match expressions are ``promoted'' to variables
only when the input examples show that this is required,
our synthesis algorithm does not explicitly keep track of all the possible
rules that can be consistent with the examples. 
Our completeness result reflects this idea.
\begin{theorem}[Completeness]
Given a set of examples $E$, for every $\langc$ rule $r$  that is consistent
with $E$,
either \linebreak $r\in\conc(\textsc{SynthRules}(E))$ or
there exists an example $e$ such that
$r\in\conc(\textsc{SynthRules}(e::E))$.
\end{theorem}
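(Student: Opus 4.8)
The plan is to argue by cases on whether $r$ already lies in $\conc(\textsc{SynthRules}(E))$. If it does, there is nothing to prove, so assume it does not, and write $\textsc{SynthRules}(E)=(\sfixRule{cmd}{err}{\symfix})$. By Soundness every concrete rule in $\conc(\textsc{SynthRules}(E))$ agrees with $r$ on every input of $E$, and by Order invariance I may treat $E$ (and later $e::E$) as a set. The goal is then to exhibit a single new example $e$, consistent with $r$, such that one more refinement step ``unlocks'' exactly the extra structure of $r$, i.e. $r\in\conc(\textsc{SynthRules}(e::E))$.

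First I would catalogue how $r$ can differ from $\textsc{SynthRules}(E)$, reading off the laziness of the algorithm. \textsc{FindVariables} keeps a command/error position as $\constStr(s)$ precisely when every example of $E$ carries $s$ there, and otherwise records $\varmatch(i,l,r)$ with $l$ (resp. $r$) the \emph{longest} prefix (resp. suffix) common to all of $E$ at that position; \textsc{SynthFix} keeps a fix position as $\fconstStr(s)$ precisely when all outputs of $E$ agree there, and otherwise \textsc{AllSubstrings} enumerates \emph{all} $\substrAppend$ expressions consistent with $E$ (over the variables $V$). Hence, where $r$ and the symbolic rule agree structurally, $\conc$ already contains the relevant piece: a matching constant leaves no choice, a $\varmatch$ has its prefix/suffix forced, and a non-constant fix position already contains $r$'s $\substrAppend$ by consistency (provided the variable it references is in $V$). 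So $r$ can fail to be in $\conc(\textsc{SynthRules}(E))$ only by (a) using $\varmatch$ where the symbolic rule uses $\constStr$; (b) using $\varmatch(i,l',r')$ where $l'$ is a proper prefix, or $r'$ a proper suffix, of the symbolic rule's longer $l,r$; or (c) using a $\substrAppend$ where the symbolic rule has $\fconstStr$ (which, since $r$'s fix can only reference positions that are $\varmatch$ in $r$'s own match, means $r$'s $\substrAppend$ references a position that merely happens to be constant-valued across $E$).

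Next I would construct the witness $e=(\scmd,\serr,\sfix)$. At each command/error position $i$: if $r$ has $\constStr(s_i)$ put $s_i$ there, so the refinement keeps it constant; if $r$ has $\varmatch(i,l',r')$ put a string $l'\cdot\delta\cdot r'$ where $\delta$ is chosen so that (i) it matches $r$'s $\varmatch$ trivially, (ii) it differs from the string $E$ forces at position $i$ when that position was constant, and (iii) it diverges from the symbolic rule's recorded prefix/suffix right at index $|l'|$ (resp. from the right end at $|r'|$) so that refinement records exactly $l',r'$. Since $\delta$ ranges over all strings and I may take $|\delta|\ge 2$, controlling its first and last characters independently, such a $\delta$ exists whenever $|\Sigma|\ge 2$. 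Then set $\sfix:=\semantics{r}(\scmd,\serr)$; this is well defined and makes $e$ consistent with $r$ by construction, and the arities of $\scmd,\serr,\sfix$ are the ones fixed by $r$ (hence also those of $E$ by consistency), so lengths match. For case (c) I would additionally arrange that the values bound by the variables occurring in $r$'s $\substrAppend$ at fix position $j$ are ``generic'' enough that $r$'s output at $j$ differs from the value $E$ carries there; then \textsc{SynthFix} promotes position $j$, position $j$'s variable enters $V$, and \textsc{AllSubstrings} on $e::E$ keeps $r$'s $\substrAppend$ among the survivors (it is consistent with all of $e::E$). Finally I would trace \textsc{FindVariables} and \textsc{SynthFix}/\textsc{SynthSubstrings} through the refinement on $e::E$ to conclude that the match component of $\textsc{SynthRules}(e::E)$ equals $r$'s and its fix component concretizes to a set containing $r$'s fix, i.e. $r\in\conc(\textsc{SynthRules}(e::E))$.

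The main obstacle is the genuinely irrecoverable sub-case of (c): a $\substrAppend$ whose position expressions force a constant extraction (e.g. an empty substring, or an extraction invariant under the referenced variable) is semantically a constant, is correctly kept as $\fconstStr$ by the algorithm for \emph{every} extension of $E$, and hence can never be recovered by any $e$. The clean way I would handle this is to prove the statement for rules $r$ in the obvious normal form (no redundant variables, no constant-valued $\substrAppend$), or equivalently to read the conclusion up to semantic equivalence of fix expressions; I would state this normalization explicitly up front. Aside from that, the only real calculation is verifying that the per-position constraints (i)–(iii) on the witness string are mutually compatible and realizable under $|\Sigma|\ge 2$, which the construction above makes routine.
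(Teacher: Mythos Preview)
Your argument is correct and takes the same route as the paper: construct a single witness $e$ that forces the lazy representation to promote exactly the constants/variables/fix-expressions that $r$ needs. The paper's appendix organizes this into two cases --- Case~1 (the match parts already agree, but some fix position of $r$ is a $\substrAppend$ while the symbolic rule still has $\fconstStr$; one perturbs the relevant variable's binding in an existing example so the $\substrAppend$ outputs a fresh value) and Case~2 ($r$ uses extra variables; one prepends a fresh symbol at those input positions and then falls back to Case~1) --- whereas you split into (a)/(b)/(c) and build $e$ from scratch rather than by perturbing an existing example.

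Two differences are worth recording. First, your case~(b) (shrinking the recorded prefix/suffix of an already-present $\varmatch$ down to $r$'s $l',r'$) is handled explicitly by you but is not addressed in the paper's Case~2: prepending a fresh $b$ drives the new prefix to $\varepsilon$ rather than to $r$'s $l'$, and the paper never revisits positions that were already $\varmatch$. Your construction, which controls the first and last characters of $\delta$, is tighter here. Second, the obstacle you flag is real: a $\substrAppend$ that is semantically constant (e.g.\ empty extraction) cannot be exposed by any $e$ consistent with $r$, and the paper's Case~1 instruction ``add a new character between $p_L$ and $p_R$'' tacitly assumes the extraction is nontrivial. Your proposed reading of the conclusion up to semantic equivalence (equivalently, restricting to normal-form rules) is the natural way to close this, and matches the intended meaning of the theorem.
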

\begin{proofsketch}
Concretely, a particular rule $r$ that is consistent with $E$ might not appear in $R=\textsc{SynthRules}(E)$.
However, this can only happen because the match expression of $r$ has more variables than the match of any rule in $R$.
This can be fixed by providing an example that forces
the algorithm to promote to variables all the required match expressions. \qed
\end{proofsketch}

\paragraph{Properties of the language $\langc$}

We define the size of an input example $e=(\scmd,\serr,\sfix)$ and
the size of each rule $r$ consistent with $e$
as the sum of its lengths $\size(e)=\size(r)=|\scmd|+|\serr|+|\sfix|$.
For each set of examples, there can be exponentially
many rules consistent with it.
\begin{theorem}[Number of consistent rules]
Given a set of examples $E$, each of size $k$,
the set
$\conc(\textsc{SynthRules}(E))$ contains
$2^{\bigO{poly(k)}}$ rules.
\end{theorem}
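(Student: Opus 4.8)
\begin{proofsketch}
The plan is to read the bound off the structure of the symbolic rule $R=\textsc{SynthRules}(E)$ together with the definition of $\conc$ from Section~\ref{sec:synthalgo}. First I would observe that concretization branches only on the fix component: the match lists $cmd$ and $err$ are not a source of branching, since each match entry --- whether a $\constStr$ or a variable record $(\varexpr(i),B)$ --- concretizes to exactly one match expression, so $|\conc(R)|=\prod_{i=1}^{l}|\conc(f_i)|$, where $\symfix=[f_1,\ldots,f_l]$. Constant positions $f_i=\fconstStr(s)$ contribute a factor $1$, so the product effectively ranges only over the positions whose entry is a set of $\substrAppend$ expressions.

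The next step is to bound the two ingredients of this product. The number of factors is $l=|\sfix|\le k$, since the output length is fixed across every example that $\textsc{SynthRules}$ accepts (otherwise the length checks in the algorithm make it return $\bot$, and there is nothing to prove). Each factor $|\conc(f_i)|$ is the number of concrete $\substrAppend(\pl,\pr,\pref,\suf,\varexpr(j))$ expressions that survive refinement against all of $E$ at position $i$; this set is contained in $\textsc{AllSubstrings}(e,V,i)$ for some $e\in E$, whose size the analysis in Section~\ref{subsec:synthalg} bounds by $O(n_I\,n_K^5)$, with $n_I\le k$ the number of input tokens and $n_K$ the length of the longest token. Combining, $|\conc(R)|\le\bigl(O(n_I\,n_K^5)\bigr)^{l}$, which is $2^{\bigO{poly(k)}}$. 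For tightness --- matching the remark that there can be exponentially many consistent rules --- I would then exhibit a family of example sets in the spirit of the $\texttt{aaaa}/\texttt{bbbb}$ example of Section~\ref{subsec:synthalg}, but with $\Theta(k)$ output tokens each admitting at least two distinct consistent $\substrAppend$ realizations, so that $|\conc(R)|\ge 2^{\Omega(k)}$.

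I expect the per-position bound to need the most care: one must argue that every source of multiplicity inside a consistent $\substrAppend$ expression --- the choice of source variable, the choice of common substring inside it, and the choice of each of the two bounding positions, each possibly a $\constPos$ or a $\symPos(c,k,\off)$ --- contributes only polynomially many options. The genuine subtlety is that the parameters $c,k,\off$ of a $\symPos$ a priori range over infinite domains; but such an expression can occur in a consistent rule only if $c$ occurs in the relevant string, $k$ indexes one of its occurrences, and $\off$ lands inside the string, leaving only $O(n_K)$ realizable choices for each. (A minor bookkeeping point is that this bound is polynomial in $k$ only when $k$ is understood to account for token lengths and not merely token counts.) Once the $\textsc{AllSubstrings}$ size analysis is discharged with this care, the theorem follows.\qed
\end{proofsketch}
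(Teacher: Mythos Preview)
Your proposal is correct and follows essentially the same approach as the paper's proof sketch: both argue that each output position admits only polynomially many $\substrAppend$ choices (via the $\textsc{AllSubstrings}$ analysis) and that these choices multiply independently across the $|\sfix|\le k$ positions, yielding at most $2^{O(\mathit{poly}(k))}$ concrete rules. You go beyond the paper by adding an explicit lower-bound family for tightness and by flagging that the per-position bound depends on $n_K$ (the maximum token length), which is not controlled by the paper's definition of $\size$ as a token count---a genuine imprecision in the paper's statement that your caveat correctly identifies.
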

\begin{proofsketch}
  As we showed in Section~\ref{subsec:synthalg}, for each position in the output of a \langc rule there are potentially polynomially many $\substrAppend$ expressions consistent with the provided examples. For the $i$th position in $fix$, we are free to choose any of the possible $\substrAppend$s, independently of our choice at other positions. Thus, the number of possible rules is potentially exponential in $|\sfix|$.\qed
  \end{proofsketch}

 Despite the exponential number of rules represented, our data structure allows the \textsc{SynthRules} to encode these rules using only polynomial size.

Next, there exists an active learning algorithm for learning $\langc$ rules that requires only a polynomial number of examples---i.e.,
queries to the user.
\begin{theorem}[Complexity of active learning]
If there exists a target rule $r$ of size $k$,
there exists an active learning algorithm that
will learn $r$ by asking $\bigO{poly(k)}$
queries of the form:
\emph{What should the output of $r$ be on the input $(\scmd,\sfix)$}.
\end{theorem}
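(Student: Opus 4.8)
The plan is to reduce active learning to the passive synthesizer $\textsc{SynthRules}$ of \figref{synthalg} equipped with a query-selection rule, and to bound the number of queries using the structural facts of Section~\ref{subsec:synthalg} together with the Completeness theorem. The learner maintains the set $E$ of triples $(\scmd,\serr,\sfix)$ gathered from its non-rejecting queries (and a record of which command/error pairs $r$ rejected), and keeps the symbolic rule $R=\textsc{SynthRules}(E)$; it halts once every concrete rule in $\conc(R)$ is semantically equivalent, hence equivalent to the target $r$, and returns one of them. What makes this tractable is that, for a fixed $E$, the functions $\textsc{FindVariables}$ and $\textsc{SynthFix}$ fix the match component $cmd,err$ of $R$ deterministically: all concrete rules in $\conc(R)$ share that match, and the only residual nondeterminism is the choice of an $\substrAppend$ expression at each of the at most $k$ fix positions. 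So the learner has two jobs: (i) issue queries that force $cmd,err$ to agree with the match of $r$; and (ii) issue queries that prune, up to semantic equivalence, all but one $\substrAppend$ at each fix position.

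For job (i), I would first recover the token counts of the command, error, and fix of $r$ together with one accepting input (I take such a seed triple to be given, as is standard in programming-by-example active learning, or to be obtained by probing; the degenerate vacuous rule that outputs $\bot$ everywhere is excluded). Then, for each of the at most $k$ input-token positions $i$, the learner asks a constant number of queries that keep the other positions equal to the seed and replace token $i$ by a few fresh strings: if $r$ still returns a non-$\bot$ answer for some of them, then $r$ carries a $\varmatch$ at position $i$ and the surviving answers witness two distinct bindings; after that, inserting a neutral junk string immediately after the $t$-th character of the seed's $i$-th token and searching for the smallest $t$ still accepted reveals the length, hence the identity, of the prefix of that $\varmatch$ in $r$ (symmetrically for the suffix). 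Adding all these examples to $E$ and applying the Completeness theorem, whose proof localizes any discrepancy between $r$ and $\conc(\textsc{SynthRules}(E))$ to missing variables in the match, the match component of $R$ then has a $\varmatch$ at exactly the positions where $r$ does, with exactly $r$'s affixes; hence every rule in $\conc(R)$ has the match of $r$, and $r$ itself lies in $\conc(R)$. This costs $\bigO{k}$ queries for the classification plus a number linear (or logarithmic) in the token lengths for the affixes, which is $\bigO{poly(k)}$ when $k$ is read, as in the preceding theorems, to bound the lengths of the strings occurring in $r$ as well.

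For job (ii), recall from the $\textsc{AllSubstrings}$ analysis of Section~\ref{subsec:synthalg} that, at a single fix position, only polynomially many candidate $\substrAppend$ expressions are consistent with one example. Given two surviving candidates, the learner first decides \emph{offline} whether they are semantically equivalent --- a finite check comparing the chosen variables, the two affix strings, and the position expressions; if equivalent it keeps both, otherwise it constructs an input on which they differ and queries it, eliminating at least one. Such an input always exists and is cheap to build: if the candidates use different variables the learner binds those (now known to be free) variables to unrelated strings; if they differ in a position expression the learner picks the relevant binding so the two expressions land on different indices, exploiting the $\indices$-based semantics of $\symPos$ and $\constPos$; differing affixes appear verbatim in the output. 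The absence of recursive binary concatenation in $\langc$ is essential here: each output token is produced by a single substring extraction of a single variable, so one well-chosen query separates any two inequivalent candidates. With polynomially many candidates at each of the at most $k$ fix positions, job (ii) uses $\bigO{poly(k)}$ queries, after which every rule remaining in $\conc(R)$ is equivalent to $r$, and the learner outputs one of them.

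The step I expect to be the main obstacle is controlling $\bot$ throughout: a query meant to disambiguate one parameter may be rejected by $r$ for an unrelated reason --- a still-unclassified constant match position, an affix mismatch elsewhere, or a position expression running off the end of a string --- and then the oracle's answer carries no information. I would handle this by anchoring every query at a known accepting input and perturbing a single coordinate in a way that provably preserves the structure discovered so far: constant positions keep their seed values, already-discovered affixes are preserved, junk strings avoid characters that figure in the position expressions under consideration, and variable bindings are padded so those position expressions stay in range. Proving that a safe separating perturbation always exists is exactly where the bounded shape of $\langc$ --- affixes drawn from substrings of the examples, positions pinned by $\indices$, no recursion --- does the work. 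The remaining loose ends are obtaining the initial accepting example, the learner's a priori ignorance of $k$ (handled by the usual doubling of a size bound at polynomial overhead), and the fact that when $\langc$ admits two syntactically distinct but equivalent rules the oracle cannot separate, what is learned is a rule equivalent to $r$ rather than literally $r$.
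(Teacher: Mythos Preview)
Your proposal is correct and follows the same two-phase strategy as the paper's proof sketch: first issue queries to determine which match positions carry variables versus constants (the paper allocates $k$ queries for this), then issue queries to disambiguate among the polynomially many candidate $\substrAppend$ expressions at each fix position. Your version is considerably more careful than the paper's terse sketch---you address affix recovery for $\varmatch$, $\bot$-handling via anchored perturbations, semantic equivalence of surviving candidates, and the need for a seed accepting input, all of which the paper glosses over---but the underlying decomposition and the counting argument are the same.
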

\begin{proofsketch}
The algorithm first asks $k$ queries to figure out which match expressions are variables and which ones are
constants. Then, for each output component for which there exists two possible fix expressions $\substrAppend$
consistent with the examples, it asks a query that differentiates the two.
Since there are only $\bigO{kn_In_K^5}$ many expressions in the output
the algorithm will ask at most polynomially  many  queries.\qed
\end{proofsketch}

%
%

\section{Implementation and Evaluation}
We now describe the implementation details of \thefaq, as well as our experimental evaluation of \thefaq on a set of examples and test cases isolated from \thefuck and web forums.
\subsection{Implementation}
 We implemented the language $\langc$ and its synthesis algorithm in a system
called \thefaq. $\thefaq$ is implemented in F\#\footnote{The implementation will be made open-source and publicly available after the review
process.} and consists of some additional optimizations and design choices as described below.
\subsubsection{Implementation optimizations}
The function \textsc{AllSubstrings} in \figref{synthalg} synthesizes all $\substrAppend$
 functions that are consistent with the first input/output
pair $(s,t)$ of strings in the example set $E$ and then applies each of the synthesized functions
to the other elements of $E$ for filtering only consistent functions. In practice, we first compute the
longest common prefixes and suffixes of  the strings appearing in the components $\sfix$ of $E$
to avoid enumerating instances of the form $\substrAppend(\_,\_,l,r,\_)$ such that $l$ or $r$ are not prefixes
or suffixes of some output string $t$ appearing in $E$.

The other optimization is based on the following property of the \textsc{RefineRule} function:
when adding a new example to $r$, if the function \textsc{FindVariables} introduces a new set of variables $V$,
all the new instances of $\substrAppend$ that did not already appear in $r$ depend on one of the newly introduced
variables in $V$.
Based on this idea, the function \textsc{AllSubstrings} only has to compute functions of the form
  $\substrAppend(\_,\_,\_,\_,\varexpr(i))$ where $i\in V$, and can reuse the previously computed functions for the other variables by simply filtering the ones that behave correctly on the newly introduced example.

\subsubsection{Ranking}
\label{sec:ranking1}
Since there can be multiple possible expressions in \langc that are consistent with the examples, we employ a simple ranking technique to select an expression amongst them. If there are multiple $\substrAppend$ expressions that can generate the desired output string, we select the expression that uses the variable with the lowest index---i.e., the leftmost one. Similarly, the $l$ and $r$ included in $\varmatch$ expressions implicitly encode all rules matching on prefixes and suffixes of $l$ and $r$, respectively. We choose the longest $l$ and $r$ over all others.

As the example set increases in size, we envision users will likely submit diverse sets of examples, particularly in use cases with thousands of users submitting examples. As users submit examples which draw from heterogeneous collections of command parameters, $\varmatch$ prefixes and suffixes should converge to the least restrictive versions. Similarly \thefaq should discover the least restrictive set of constants for both match expressions. As these input parameters vary over the set of examples, spurious ambiguities in $\substrAppend$ should be eliminated when \thefaq is presented with specific fix examples which function as counter-examples to unnecessary substring expressions.

\subsection{Evaluation}
\label{sec:evaluation}
We now describe our experimental evaluation. The experiments were run on an Intel Core i7 2.30GHz CPU with 16 GB of RAM. We present both qualitative and quantitative analysis of the algorithm. 
We assess the expressiveness of \thefaq by attempting synthesis on a benchmark suite that includes the rules in the tool \thefuck. We then evaluate the performance of \thefaq and its scalability.

\subsubsection{Benchmark Suite}
We compiled our benchmark suite from an initial set of of 92 benchmarks, which were collected from both $\thefuck$ (76) and online help forums (16). We considered the 76 repair rules hard-coded in the \thefuck tool to assess the expressiveness of \thefaq. Since rules in \thefuck can use arbitrary Python code, it is hard to exactly compare them to the ones produced by \thefaq. We use manual testing to check that a rule $r$ generated by our tool is \emph{consistent} with a rule $r'$ in \thefuck. To do so, we manually constructed a set of examples based on the pattern-matching and textual substitutions performed by the \thefuck rules.

The other sixteen example sets were obtained from examples found during a non-exhaustive survey of command-line help forums on the web. These commands consist of various types of git, svn, and mvn commands, including committing, reverting, and deleting from repositories, as well as installing and removing packages.

The \thefaq system is able to synthesize a rule for 81 of the 92 benchmarks. The remaining 11 failing benchmarks can be divided into three broad categories: i) Hard-coded operations searching for specific strings in some context (8), ii) Complex patterns checking relationships between variable expressions (2), and iii) Error messages displaying parts of the input file's content (1). We did not provide examples for these 11 rules. We elaborate more about these categories in Section~\ref{sec:limitations}.
\paragraph{Number of examples}
We observed that it was natural to provide two to five examples per benchmark for $\thefaq$ to uniquely learn the desired fix rule. We also provided additional examples for manually testing the learned rules, yielding a set of three to six examples. Given the rule appearing in \figref{rulesynth}, for example, we used the two examples in \figref{rulesynth} and another example with the file name Employee.java. In future, we envision users to contribute different examples to the system for automatically building a large corpus of learned fix rules.

While these examples are synthetic examples reverse engineered from other sources, they are also natural examples which exercise the range of e.g. path and file names one would expect to see in a real Unix system. In the case of the repaired command in Section~\ref{complexsubs}, the natural two-example set would consist of two distinct directory names which do not share prefixes and suffixes, as well as filenames with distinct prefixes and extensions. 

\subsubsection{Qualitative evaluation}
Given a single set containing examples for all the 81 cases in which $\thefaq$ is capable of synthesizing a rule, we performed synthesis as described in Section~\ref{partsynth}. 
For each rule we retained a single example from the training set and used it to test the accuracy of each rule. 
We also report how often a given input could be repaired using more than one rule.

\paragraph{Results} For all 81 cases, \thefaq synthesized a rule consistent with the corresponding \thefuck rule or web forum answer. In some cases we had to synthesize more than one $\langc$ rule to capture the different possible
behaviors of a single rule in \thefuck. For example, one can try adding `sudo' in front of a command
for several possible errors such as ``Command not found'', ``You don't have the permission" etc.
In such cases, thanks to the partitioning algorithm, \thefaq generated a separate rule for each possible error message. 
For each case where we synthesized a rule, correctness was independent of our choice of examples. 
If the correct rule was synthesized, it was synthesized regardless of which subset of the examples provided for that rule we selected.

\paragraph{Distribution of rule sizes}
We define the size of an expression such as $\scmd$, $\serr$, and $\sfix$ as the number of strings present in it. The distribution of the size of the benchmarks in terms of the sizes of the $\scmd$, $\serr$, and $\sfix$ tuples in input-output examples is shown in \figref{indsizebenchmarks}. Note that we do not show two benchmarks in the graph with disproportionately high $\serr$ expression size of 110 for clarity.  The average total size of the examples in the benchmarks was $15.91 \pm 17.18$\footnote{We use $a\pm s$ to denote an average $a$ with standard deviation $s$.}, with the maximum size of $116$. The average sizes for the individual expressions of the examples were:
i) $\scmd$: $2.38 \pm 1.01$ with maximum of $6$, ii) $\serr$: $10.12 \pm 16.85$ with a maximum of $110$, and iii) $\sfix$: $3.41 \pm 1.55$ with a maximum of $7$.

\paragraph{Distribution of rule matching}
For each set of example provided for an individual rule, we isolated one example to measure the accuracy of the tool.
All the test examples were correctly described by at least one of the synthesized rules.
For the majority of the test cases, there was exactly one rule which matched both the command and error message. The remaining $12$ test cases which matched against multiple rules came from collections of example sets which represented different fixes of the same command and error messages.\\

\begin{centering}
\begin{tabular} {l r}
  Total test cases & 81\\
  \hline
  Test cases matched by one rule & 69\\
  Test cases matched by multiple rules & 12\\
\end{tabular}\\\vspace{3mm}
\end{centering}

\begin{figure}
\centering
\includegraphics[scale=0.31]{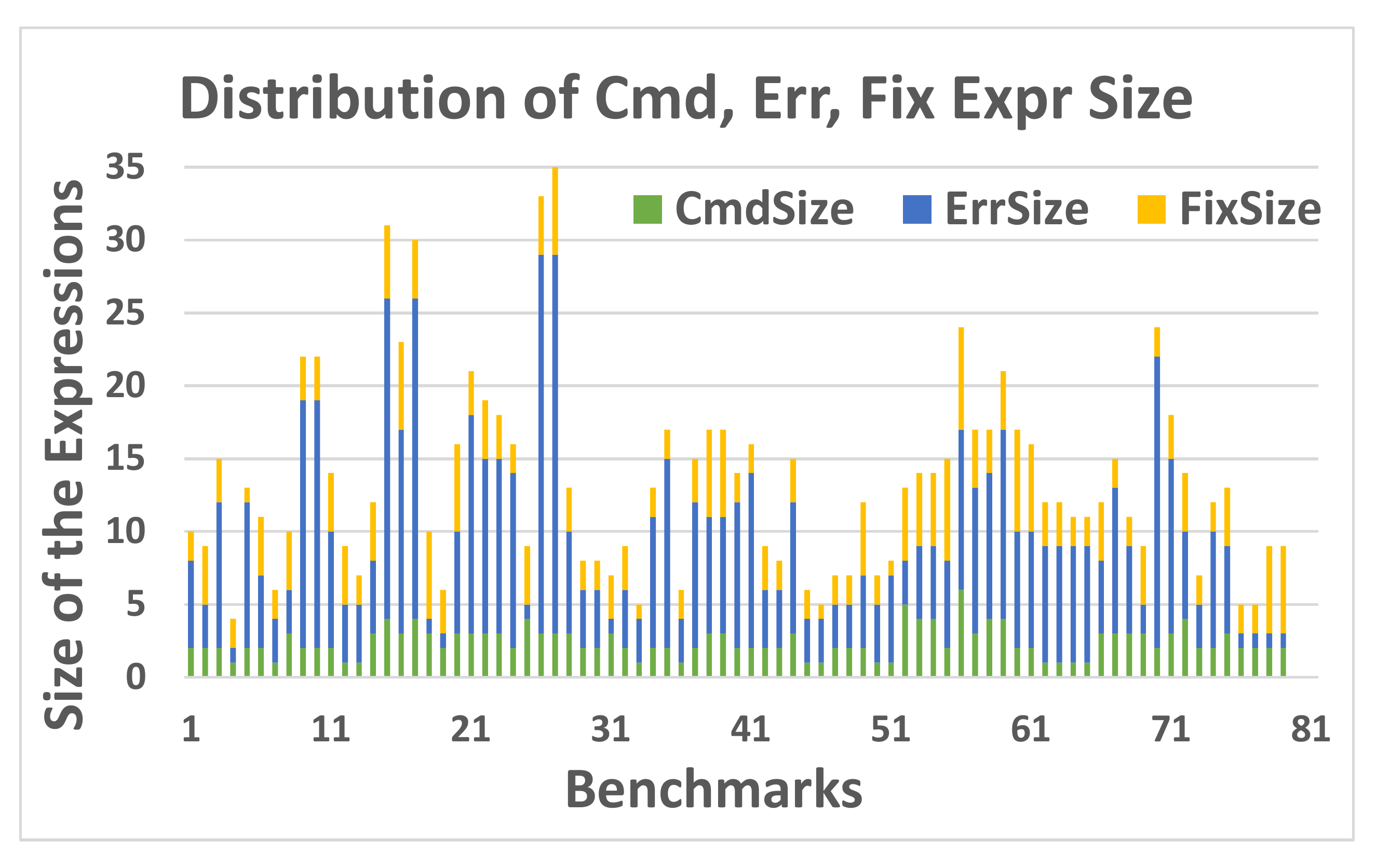}
\caption{The distribution of benchmarks in terms of individual sizes of $\scmd$, $\serr$, and $\sfix$ expressions in the examples.}
\figlabel{indsizebenchmarks}
\end{figure}

\paragraph{Ranking}
Consistent with our hypothesis in Section~\ref{sec:ranking1}, a diverse set of examples was sufficient for eliminating spurious restrictions and substring expressions. In every test case, the rule chosen by our ranking policy was capable of correcting all test cases presented. In practice, many rules still have several possible correct $\substrAppend$ expressions. However, this remaining ambiguity occurs because the same string can appear many times in the command and error message (e.g., the string Employee in the example in Section~\ref{sec:motiv:adding}).

\subsubsection{Quantitative evaluation}
We now report on the quantitative metrics of our synthesis algorithm.
In this section we only report data for the 81 benchmarks
for which \thefaq can successfully synthesize a \langc rule.

\begin{figure}
\centering
\includegraphics[scale=0.57]{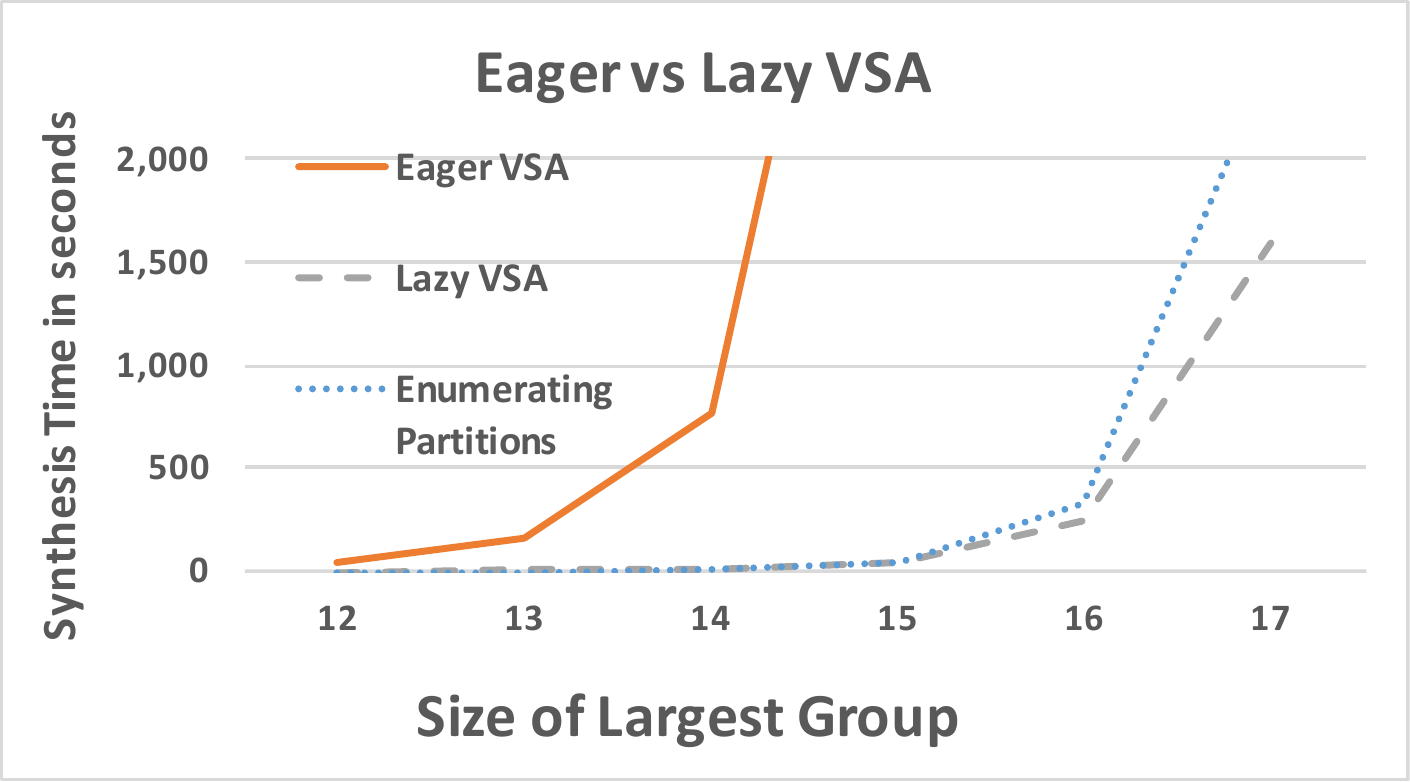}
\caption{Synthesis times for different benchmarks for the lazy and non-lazy rule representations.}
\figlabel{lazyvsnonlazy}
\end{figure}

\paragraph{Evaluation of lazy VSA synthesis time}
In \figref{lazyvsnonlazy}, we show the time taken to partition and synthesize \langc rules for the 81 benchmarks, using both the lazy and a non-lazy rule representation, as the number of examples per benchmark increases. The non-lazy representation always considers match and fix expressions as variables, rather than initially starting with constants.

To test the performance of the lazy and non-lazy representations as the size of the input set increases, we iteratively increase the size of the training set. For each test, we add a single example to one of the benchmarks and then attempt synthesis. We plot the synthesis time with respect to the largest set of examples for which we must enumerate possible partitions until we successfully synthesize rules. To understand the performance overhead induced by synthesis, we also evaluate a version of the algorithm which enumerates partitions without performing synthesis. For each algorithm, we iteratively increased the training set size until the algorithm reached a $2,000$ second timeout.

The non-lazy VSA incurs a significant overhead, and scales much worse than the lazy version, reaching the timeout when the largest set has 14 examples. The lazy VSA, in contrast, is much closer to the optimum; the synthesis time is negligible compared to the inherent cost of enumerating all partitions of a set. In fact, the lazy synthesis actually completes faster than exhaustive enumeration. This is reasonable, as the first partitioning which yields a successful \langc rule for all subsets tends to be somewhere near the middle of the enumeration, and thus does not incur the cost of enumerating the remainder of the search space.
In summary, the lazy VSA strictly outperforms non-lazy VSA and can handle much larger sets of examples.

\paragraph{\textup{$\substrAppend$} expression in synthesized rules}

The distribution of $\fconstStr$ and $\substrAppend$ expressions in the synthesized \langc rules is shown in \figref{learntdistribution}.
The output components of the synthesized rule contain on average $29.01\% \pm 24.4\%$ $\substrAppend$ expressions.
Concretely, a synthesized rule contains on average $0.91\pm 0.76$ $\substrAppend$ expressions.

\begin{figure}
\centering
\includegraphics[scale=0.31]{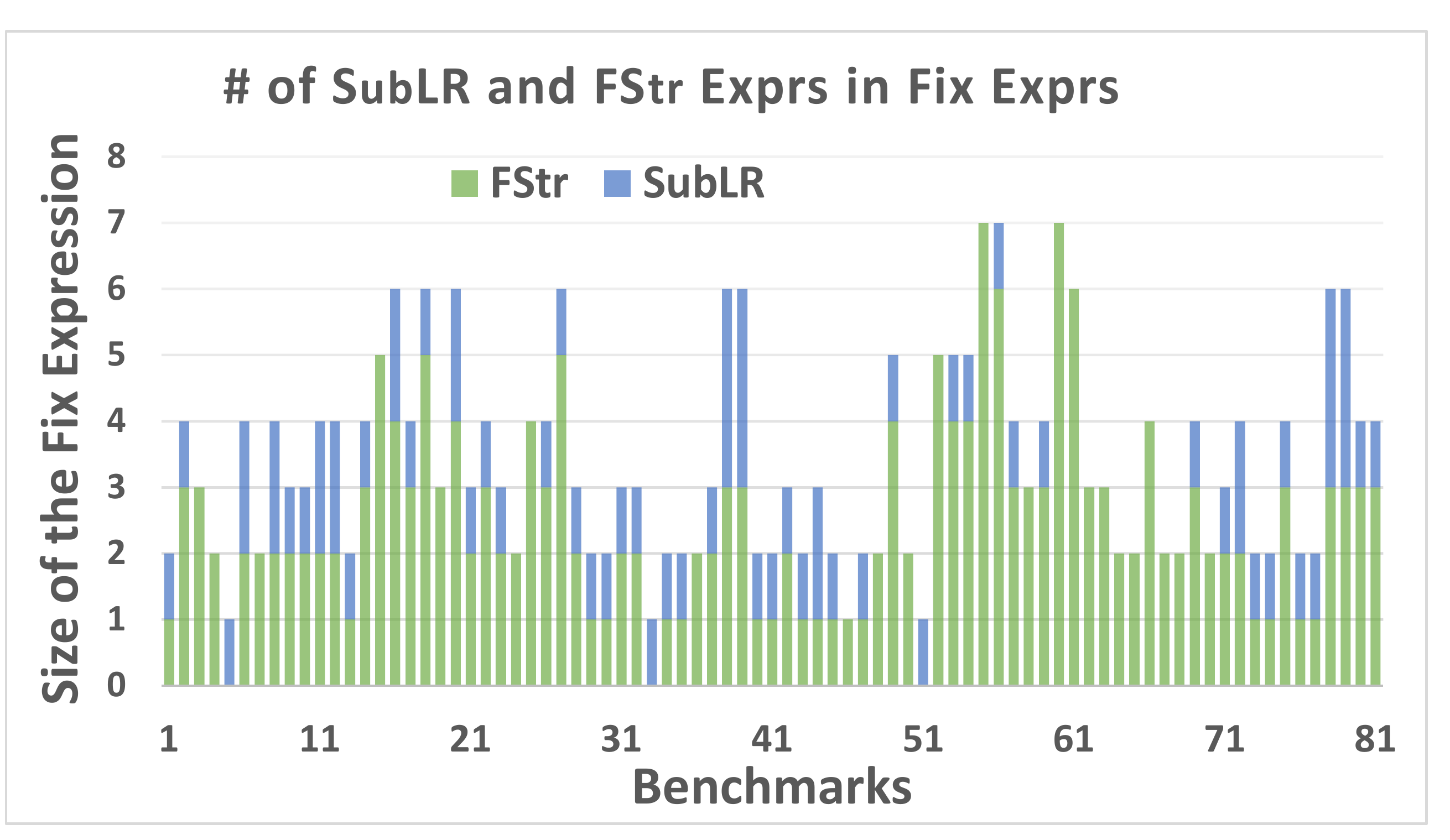}
\caption{The distribution of $\fconstStr$ and $\substrAppend$ expressions in the final synthesized repair expression.}
\figlabel{learntdistribution}
\end{figure}

\paragraph{Synthesis time vs. number of \textup{$\substrAppend$} expressions}
The synthesis time for different numbers of $\substrAppend$ expressions in the repair rule is shown in \figref{learningbyfixfunc}. As expected, the benchmarks that do not contain $\substrAppend$ expressions take negligible time. The benchmarks involving two $\substrAppend$ expressions on average require more time than the benchmarks with a single $\substrAppend$ expression. Interestingly, the benchmarks with 3 $\substrAppend$ expressions take lesser time than the benchmarks with 2 $\substrAppend$ expressions. A possible explanation for this behavior is that the complexity of substring extraction tasks for these benchmarks is relatively simpler (e.g. identity) than the benchmarks with 2 $\substrAppend$ expressions.

\begin{figure}
\centering
\includegraphics[scale=0.31]{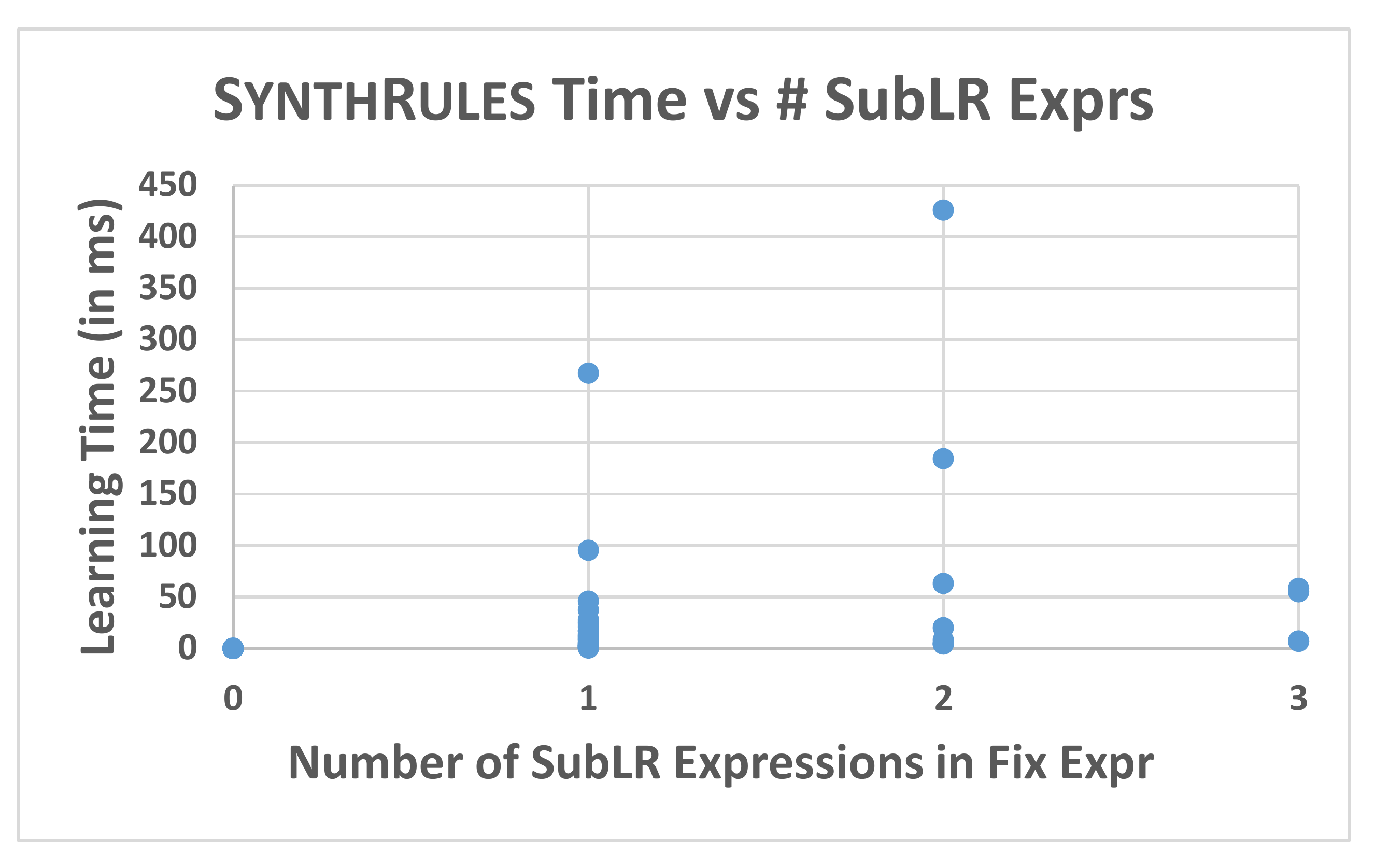}
\caption{Synthesis times for varying number of $\substrAppend$ expressions in the repair rule.}
\figlabel{learningbyfixfunc}
\end{figure}

\paragraph{Scalability of synthesis algorithm with example size} Since all real-world examples we collected are relatively of small size
(with maximum size of 116 space-separated strings), we evaluate the scalability of the \textsc{SynthRules} algorithm by creating artificial examples of increasing sizes. We create these artificial examples by repeating the $\scmd$, $\serr$, and $\sfix$ commands multiple times for the example shown in Section~\ref{sec:motiv:extractingsubs}. The synthesis times for increasing size of examples is shown in \figref{scalability}. From the graph, we observe that the synthesis times scale in a quadratic fashion with respect to the example size.

\begin{figure}
\centering
\includegraphics[scale=0.31]{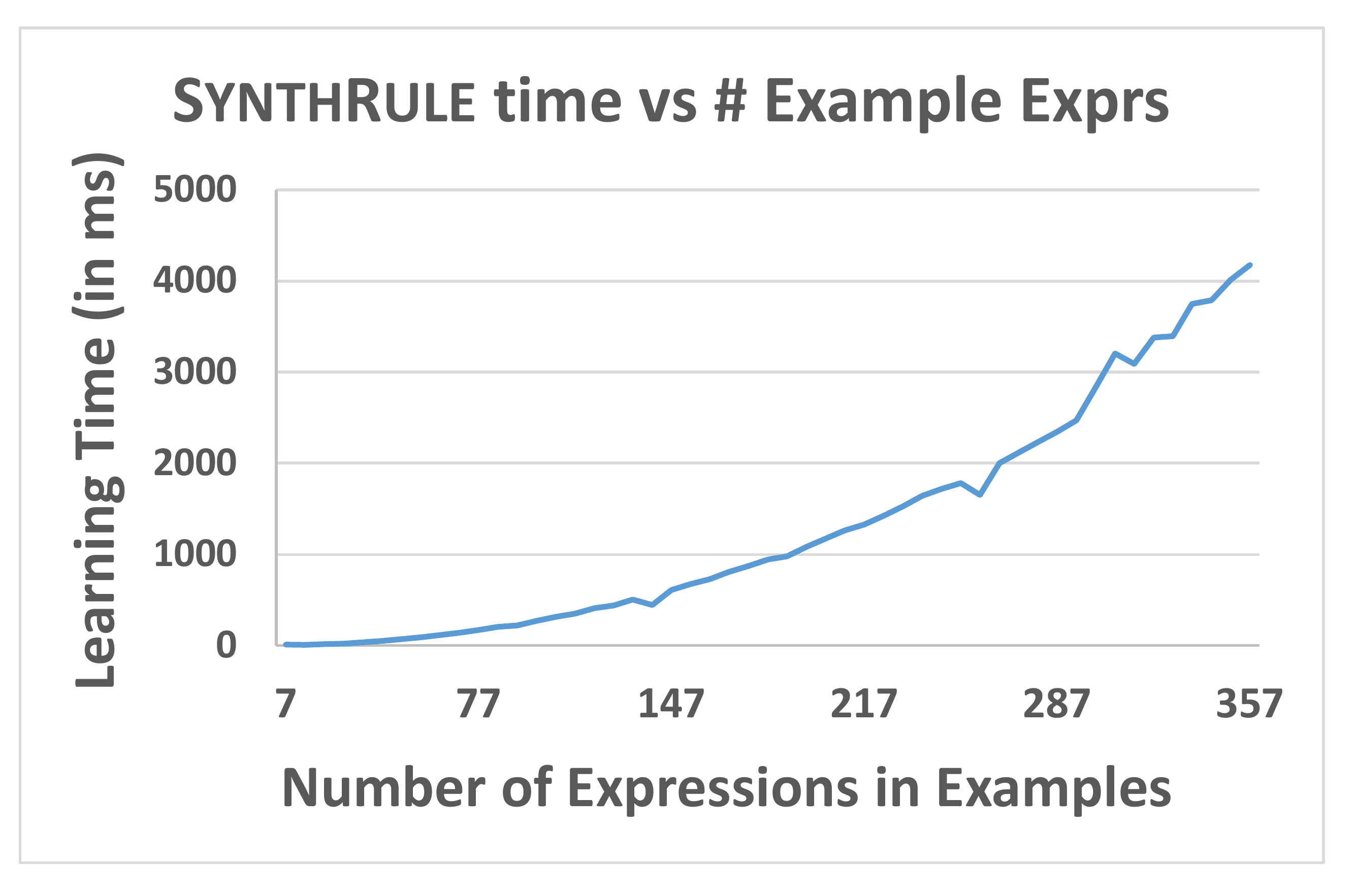}
\caption{Synthesis times with increasing size of examples.}
\figlabel{scalability}
\end{figure}

\section{Limitations}
\label{sec:limitations}

We showed that the language $\langc$ is able to express many real-world command line repair rules
and that these rules can be synthesized using few examples.
We now present some limitations of our approach, in particular with respect to the
11 $\thefuck$ rules that $\langc$ could not describe.
\begin{description}		
\item[Complex patterns] Two rules were checking complex properties of the input
that \langc cannot capture. 
For example, \langc cannot check whether the error message contains some special character. 
\langc's conditional matching is limited to whole string or prefix/suffix matching, and thus cannot check if e.g. a file name contains a non-unicode whitespace character. All character relative logic occurs in the substring generation after input matching. \langc also cannot check whether some string in the input command is repeated more than once.

\item[Context-dependencies] Eight rules had hard-coded operations that were searching some context (the file system, a configuration file, etc.) for specific strings to complete the output.
\langc only receives as inputs the command and the error message, and the rules currently cannot use any context.


\end{description}

\section{Related work}
\label{sec:related} 

\paragraph{Version-space algebra for synthesis} The concept of Version-space algebra (VSA) was first introduced by Mitchell~\cite{Mitchell82} in the context of machine learning and was later used by Lau et al. to learn programs from demonstrations in a Programming By Examples/Demonstrations system called SmartEdit~\cite{smartedit}. It has since been used for many PBE systems from various domains including syntactic string transformations in FlashFill~\cite{popl11}, table transformations~\cite{HarrisG11,vldb12}, number transformations~\cite{cav12}, text extraction from semi-structured text files in FlashExtract~\cite{flashextract}, and transformation of semi-structured spreadsheets to relational tables in FlashRelate~\cite{flashrelate}. Our synthesis algorithm also uses VSA to succinctly represent a large set of conforming expressions. However, in contrast to previous approaches that represent all conforming expressions concretely and then use intersection for refinement, our synthesis algorithm maintains a lazy representation of rules and concretizes the choices on demand in a lazy fashion only when it is needed. Moreover, our careful design of DSL operators and the corresponding VSA in $\thefaq$ lead to a polynomial time synthesis algorithm unlike most previous approaches that have exponential time synthesis algorithms.

In particular, it is illustrative to compare the FlashFill DSL with $\langc$. While, like $\langc$, FlashFill synthesizes string manipulations from input-output examples, specific performance properties make it less suitable for large scale learning from large sample sets. Prior to developing $\thefaq$, we evaluated the possibility of simply using the FlashFill algorithm as-is for our purpose of learning command repair rules. Early empirical results indicated that the off-the-shelf algorithm scaled poorly as the error messages increased in length, which was a common occurrence for our benchmarks. Moreover, other limitations of no offset operator in position expressions and support for finite hard-coded regular expression tokens made FlashFill unsuitable for learning $\substrAppend$ expressions.

We isolated several theoretical properties of FlashFill's key operators which yielded poor performance on large inputs. In particular, the binary concatenation operator over arbitrary substrings of the entire input string induces a DAG structure for the symbolic representation of programs. More explicitly, given an example output string $S$, there exists a node $n_p$ for each position $p$ in $S$. An edge from $n_p$ to $n_p'$, $p < p'$ represents the substring $S[p:p']$. Each edge is labelled with the set $F$ of functions over the example inputs which yield the substring. Thus, a path from $n_0$ to $n_{|S|}$ represents some concatenation of the output of several string operations which yields the desired output. Given a DAG $D$ consistent with a set of examples $E$, FlashFill incorporates a new example $e$ represented by DAG $D'$ by taking the cartesian product of the vertices of $D$ and $D'$, to construct a new DAG $D''$. An edge with label set $F''$ in the new DAG represents a set of functions which were part of a correct program for the examples $E$, and also map from the inputs of $e$ to a substring of $e$'s output. The iterated cartesian products yield time complexity exponential in the number of examples.

$\langc$, in contrast, posesses unary string operations constrained to specific variable terms. $\langc$'s unary $\substrAppend$ operator yields a language that is disjoint from FlashFill with concatenation removed; we obtain a language expressive enough for a large set of practical command repair transformations isolated from real use cases, while dramatically improving worst-case performance. The constrained nature of the program representation lets the synthesis algorithm eliminate programs inconsistent with a new example without directly computing the intersection of two sets of candidate programs, ensuring polynomial-time performance even in the worst case. Moreover, $\langc$ also allows for repair transformations that require arbitrary offsets from a regex match, which are not expressible in the FlashFill DSL.

\paragraph{Programming by Examples (PBE)} PBE has been an active research area in the AI and HCI communities from a long time~\cite{lieberman2001your}.  In addition to VSA-based data wrangling~\cite{cacm12}, PBE techniques have recently been developed for various domains including interactive synthesis of parsers~\cite{LeungSL15}, synthesis of recursive functional programs over algebraic data types~\cite{FeserCD15,OseraZ15}, synthesizing sequence of program refactorings~\cite{RaychevSSV13}, imperative data structure manipulations~\cite{storyboardfse}, and network policies~\cite{YuanAL14}. Our technique also learns repair rules from few input-output examples of buggy and fixed commands, but both our problem domain of learning command repairs and the learning techniques of using lazy VSA are quite different from these PBE systems.

\paragraph{Program repair}

Research in automated program repair focuses on automatically changing
incorrect programs to make them meet a desired specification~\cite{GouForWei13}.
The main challenge is to efficiently search the space of all programs
to find one that behaves correctly.
The most prominent search techniques are enumerative or data-driven.
GenProg uses genetic programming
to repeatedly alter the incorrect program in the hope to make it correct~\cite{LeGDewForWei12}. 
Data-driven approaches use the large amount of code that is publicly available online
to synthesize likely changes to the input program~\cite{RayVecYah14,YakElmNevOuzIly11}. Prophet~\cite{prophet} is a patch generation system that learns a probabilistic application-independent model of correct code from a set of successful human patches. Unlike these techniques that learn a global model of code repair across different applications, our technique learns command-specific repairs by observing how expert
users fix their buggy commands --- i.e., from both the incorrect command the user started with (together with the error message)
and the correct command she wrote as a fix.

\paragraph{Crowdsourced Repairs}
\emph{HelpMeOut} is a social recommender system that 
helps novice users facing programming errors by showing them examples 
of how other programmers
have corrected similar errors~\cite{HarMacBraKle10}.
While \emph{HelpMeOut} can show examples of similar fixes it does not
concretely show
the user how the code should be corrected.
This aspect is the major difference between \emph{HelpMeOut} and \thefaq.

\thefuck provides a Python interface for command substitution and repair rules, and it requires a degree of language and tool-specific knowledge that may not be accessible to command line novices, particularly if non-trivial substring operations are required to derive the desired command. Much like FlashFill, we aim to emulate the workflow of non-technical users communicating with experts on web forums. For a beginner learning the command line, Python string manipulations are likely a fairly challenging task, and the cost of an incorrectly transformed shell command is potentially catastrophic. In such a situation where a non-expert desires a new \thefuck rule, such a user may provide an example of several command/error pairs, and the desired fix for each, from which an expert would write the desired Python code. NoFAQ shortens this loop by moving the fix synthesis into a polynomial time algorithm on the user's machine.

\paragraph{Rule learning}
Rules provide a simple way to represent programmers actions and in general any type of data transformation.
Rule learning has been extensively investigated in classical machine learning and data mining~\cite{FurGamLav12}. 
The goal of rule learning is to discover and mine rules describing interesting relations 
appearing in data.
Common concept classes for describing rules are Horn clauses or association rules~\cite{PiaSha91}.
The approach presented in this paper differs from rule learning in two aspects:
1) the rules are expressed in a complex concept class and are hard to learn --- i.e., \langc programs;
2) the examples are given by a teacher that has in mind a target rule.
In the future we plan to build a system that uses rule learning techniques to mine \langc rules
from unsupervised data.

\paragraph{Program synthesis}
There has been a resurgence in Program Synthesis research in recent years~\cite{sygus}. In addition to examples (as described above), there have been several techniques developed for handling other forms of specifications such as partial programs~\cite{sketchthesis,solar2005bitstreaming}, reference implementations~\cite{Schkufza0A13}, and concrete traces~\cite{UdupaRDMMA13}. While these specification mechanisms have been found to be useful in several domains, we believe examples are the most natural mechanism for specifying command line repairs especially for beginners. There is also a recent movement towards using data-driven techniques for synthesis~\cite{RaychevVK15}, e.g. the PLINY  project (\url{http://pliny.rice.edu/index.html}). In future, we envision our system to also make use of large number of examples of buggy commands and their corresponding repairs to learn a big database of \langc  rules.

\section{Conclusion and Future Directions}
We presented a tool \thefaq that suggests possible fixes to common buggy commands by learning from examples of how experts fix such issues. Our language design walks a fine line between expressivity and performance:
 by careful choice of unary operators over pre-defined variables, and exclusion of arbitrary substring operations, we avoid exponential-time worst case performance, while still maintaining a useful degree of functionality.
 \thefaq was able to instantly synthesize 85\% of the 
rules appearing in the popular repair tool \thefuck and 16
other rules from online help forums.
Although $\thefaq$ tool is aimed towards repairing commands, we believe our novel combination of synthesis and rule-based program repair is 
quite general and is applicable in many other domains as well. We plan to
to apply this methodology to more complex tasks, such
as correcting syntax errors in source code, applying code optimization, and editing configuration files. In the
 future, we hope to create a tool which can take large command histories from expert users and quickly derive rules, as well as synthesize new rules online as experts use the shell.

\bibliographystyle{abbrvnat}
\bibliography{references}

\iffull
\appendix
\section{Proofs of Theorems 1, 2, and 3}

We first define a notion of completeness for a symbolic rule. Intuitively a symbolic rule 
has to summarize all possible correct rules to be complete.
\begin{definition}[Command-string completeness]
\label{def:commandcomp}
Let 
$$r=\sfixRule{cmd}{err}{fixes}$$ 
be a symbolic rule such that
$cmd=[c_1,\ldots, c_a]$,
\linebreak $err=[m_1,\ldots, m_b]$,
$fixes=[f_1,\ldots, f_c]$,
 and
$E=[e_1,\ldots, e_n]$ a sequence of examples.
We say that $cmd$ is complete for $E$ and produces variables $V_1$, 
$CompC(cmd,E,V_1)$ 
iff for every $1\leq i\leq a$:
\begin{itemize}
\item $c_i=\mbox{Str}(s)$ (for some $s$) iff every example $(\scmd,\serr,\sfix)$ in $E$ is
such that $\scmd=[s_1,\ldots,s_k]$, for some $k$, and $s_j=s$.
\item $c_i=\mbox{Var-Match}(j,l,r)$ iff $i\in V$, $i=j$, and there exists two examples 
$(\scmd^1,\serr^1,\sfix^1)$ and $(\scmd^2,\serr^2,\sfix^2)$ in $E$ such that $\scmd^1=[s_1,\ldots,s_k]$,  
$\scmd^2=[s_1',\ldots,s_k']$, for some $k$,
  and $s_j\neq s_j'$.
\item If $c_i=\mbox{Var-Match}(j,l,r)$, then for every example \linebreak $(\scmd,\serr,\sfix)$ in $E$ where $\scmd = [s_1, \ldots, s_k]$, $l$ is a prefix of $s_j$. Moreover, $l$ is the longest such prefix.
\item If $c_i=\mbox{Var-Match}(j,l,r)$, then for every example \linebreak $(\scmd,\serr,\sfix)$ in $E$ where $\scmd = [s_1, \ldots, s_k]$, $r$ is a suffix of $s_j$. Moreover, $r$ is the longest such suffix.
\end{itemize}
\end{definition}
\begin{definition}[Error-string completeness]
\label{def:errcomp}
Analogously, we say that $err$ is complete for $E$ and produces variables $V_2$, 
\linebreak $CompE(err,E,V_2)$, iff
for every $1\leq i\leq b$:
\begin{itemize}
\item $m_i=Str(s)$ (for some $s$) iff every example $(\scmd,\serr,\sfix)$ in $E$ is
such that $\scmd=[s_1,\ldots,s_k]$, for some $k$, and $s_j=s$.
\item $m_i=\mbox{Var-Match}(j,l,r)$ iff $i\in V$, $i+a=j$, and there exists two examples 
$(\scmd^1,\serr^1,\sfix^1)$ and $(\scmd^2,\serr^2,\sfix^2)$ in $E$ such that $\serr^1=[s_1,\ldots,s_k]$,  
$\serr^2=[s_1',\ldots,s_k']$, for some $k$,
  and $s_i\neq s_i'$.
\item If $m_i=\mbox{Var-Match}(j,l,r)$, and $i+a=j$, then for every example $(\scmd,\serr,\sfix)$ in $E$ where $\serr = [s_1, \ldots, s_k]$, $l$ is a prefix of $s_i$. Moreover, $l$ is the longest such prefix.
\item If $m_i=\mbox{Var-Match}(j,l,r)$, and $i+a=j$ then for every example $(\scmd,\serr,\sfix)$ in $E$ where $\serr = [s_1, \ldots, s_k]$, $r$ is a suffix of $s_i$. Moreover, $r$ is the longest such suffix.
\end{itemize}
\end{definition}

\begin{definition}[Input completeness]
  \label{def:inputcomp}
If both $CompC(cmd,E,V_1)$ and
$CompE(err,E,V_2)$ hold we say
that $cmd$ and $err$ are complete for $E$ and produce variables $V_1\cup V_2$,\linebreak
$Comp(cmd, err,E,V_1\cup V_2)$.
\end{definition}

\begin{definition}[Partial $f_i$-completeness]
  \label{def:partialficomp}
We say that $f_i$ is 
partially complete, $PCompFi(f_i,E,i)$,
with respect to $E$ if the following condition holds:
$f_i=[\fconstStr(s)]$ (for some $s$) iff every example $(\scmd,\serr,\sfix)$ in $E$ is
such that $\sfix=[s_1,\ldots,s_k]$, for some $k$, and $s_i=s$.
If for every $1\leq i\leq c$, $PCompFi(f_i,E,i)$ holds, then we say that 
that $fixes$ is partially complete with respect to $V$, $PCompF(fixes,E)$.
\end{definition}

\begin{definition}[$f_i$-completeness]
  \label{def:ficomp}
If there exists a $V=\{i_1,\ldots,i_j\}$ such that 
$Comp(cmd,err,E,V)$ we say that $f_i$ is complete 
with respect to $V$ at position $i$, $CompFi(f_i,E,V,i)$, iff:
\begin{itemize}
\item $PCompFi(f_i,E,i)$.

\item $f_i=[t_1,\ldots,t_m]$ such that for all $ind$, $t_{ind}\neq \fconstStr(s)$ (for any $s$) iff
the following properties hold.
\begin{itemize}
\item For every $ind\leq m$ and $l\leq k$,
$\semanticsfun{t_{ind}}_{\state_l}=\sfix^l[i]$ (where for all $y\leq j$, $\state_l(i_y)=(\scmd^l @ \serr^l)[i_y]$).

\item 
If there exists a $\substrAppend$ function $t$ such that 
for every $ind\leq n$ and $l\leq k$, ,
$\semanticsfun{t}_{\state_l}=\sfix^l[i]$ (where for all $y\leq j$, $\state_l(i_y)=(\scmd^l @ \serr^l)[i_y]$), then 
there exists $x\leq m$ such that $t_x=t$.
\end{itemize}
\end{itemize}
\end{definition}
\begin{definition}[Fix completeness]
If for every $1\leq i\leq c$, $CompFi(f_i,E,V,i)$ holds, then we say that 
that $fixes$ is complete with respect to $V$, $CompF(fixes,E,V)$.
\end{definition}
\begin{definition}[Rule completeness]
 \label{def:completerule}
If there is a $V$ such that \linebreak 
$Comp(cmd,err,E,V)$ and $CompF(fixes,E,V)$, we say that $r$ is complete for $E$, $CompR(r,E)$.
\end{definition}
Notice that for any permutation $E'$ of $E$ $CompR(r,E)$ iff $CompR(r,E')$.

\begin{proposition}[$CompR$ and $\conc$]
\label{prop:deftocon}
Let
$E$ a sequence of examples and 
$r=\sfixRule{cmd}{err}{fixes}$ be a symbolic rule.
If $CompR(r,E)$ then every concrete rule $r'\in \conc(r)$ is consistent with any example in $E$.
Moreover, for every (non-symbolic) rule 
$$r_1=\fixRule{cmd}{err}{[t_1,\ldots,t_n]}$$ 
consistent with 
$E=[(\scmd^1,\serr^1,\sfix^1),\ldots,(\scmd^n,\serr^n,\sfix^n)]$ the following is true:
if for every $i$, $t_i\neq \fconstStr(s)$ iff for some $j_1,j_2\leq n$
$\sfix^{j_1}[i]\neq \sfix^{j_2}[i]$, then
$r_1\in \conc(r)$.
\end{proposition}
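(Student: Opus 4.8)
The statement splits into two independent claims, and the plan is to prove each by unfolding the completeness predicates of Definitions~\ref{def:commandcomp}--\ref{def:completerule} and matching them clause-by-clause against the semantics of \figref{dslsemantics}. Throughout I fix a variable set $V$ witnessing $CompR(r,E)$, so that $Comp(cmd,err,E,V)$ and $CompF(fixes,E,V)$ hold; I write $cmd=[c_1,\ldots,c_a]$, $err=[m_1,\ldots,m_b]$, $fixes=[f_1,\ldots,f_c]$, and I observe at the outset that every example in $E$ must have $|\scmd|=a$, $|\serr|=b$ and $|\sfix|=c$ — otherwise the fixed-length match and fix components could not already be consistent with it.

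For the first claim I would take an arbitrary $r'=\fixRule{cmd}{err}{[f_1',\ldots,f_c']}\in\conc(r)$, so $f_i'\in\conc(f_i)$, and an arbitrary $e=(\scmd,\serr,\sfix)\in E$, and verify $\semantics{r'}(\scmd,\serr)=\sfix$ in two steps. First, the match succeeds: a constant component $c_i=\mbox{Str}(s)$ (or $m_i$) forces $\scmd[i]=s$ by the first clause of $CompC$ (or $CompE$), and a $\varmatch(j,l,r)$ component has, by the prefix/suffix clauses of $CompC$ (or $CompE$), $l$ a prefix and $r$ a suffix of $\scmd[i]$, so $\matchexpr$ is defined in all cases; hence $\sigma=\unify(cmd,\scmd)\cup\unify(err,\serr)$ is a substitution with domain exactly $V$ and $\sigma(j)=(\scmd@\serr)[j]$. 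Second, $\semanticsfun{f_i'}_{\sigma}=\sfix[i]$ for every $i$: if $f_i$ is a constant, $PCompFi$ pins it to $\fconstStr(\sfix[i])$; otherwise $f_i=[t_1,\ldots,t_m]$ is a list of $\substrAppend$ expressions, $f_i'=t_x$ for some $x$, and the first sub-clause of $CompFi$ (applicable because all listed expressions are non-constant) gives $\semanticsfun{t_x}_{\sigma}=\sfix[i]$, the $CompFi$ substitution for $e$ being exactly this $\sigma$ and $t_x$ mentioning only a variable of $V$. Reassembling the list yields $\semantics{r'}(\scmd,\serr)=[\semanticsfun{f_1'}_{\sigma},\ldots,\semanticsfun{f_c'}_{\sigma}]=\sfix$.

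For the second claim, let $r_1=\fixRule{cmd}{err}{[t_1,\ldots,t_n]}$ be consistent with $E$ and satisfy the stated minimality condition. Since $r_1$ reproduces each $\sfix^l$ we get $n=c$, and it suffices to show $t_i\in\conc(f_i)$ for every $i$. I would fix $i$ and split on whether the strings $\sfix^l[i]$ coincide over all $l$. If they all equal $s$, the minimality condition makes $t_i=\fconstStr(s')$ for some $s'$, consistency of $r_1$ forces $s'=s$, and $PCompFi$ forces $f_i=[\fconstStr(s)]$, so $t_i\in\conc(f_i)=\{\fconstStr(s)\}$. Otherwise $\sfix^{j_1}[i]\neq\sfix^{j_2}[i]$ for some $j_1,j_2$; then minimality makes $t_i$ non-constant, so by the grammar of \figref{dslsyntax} it is a $\substrAppend(\pl,\pr,\pref,\suf,\varexpr(q))$, and $PCompFi$ forces $f_i=[t_1',\ldots,t_m']$, a list of $\substrAppend$ expressions. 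Running $r_1$ on the $l$-th example succeeds with the substitution $\sigma_l$ of the $CompFi$ definition (it has the same match part as $r$) and outputs $\sfix^l$; for the $i$-th output component $\semanticsfun{t_i}_{\sigma_l}$ to be a string rather than $\bot$ we need $q\in V$, after which consistency gives $\semanticsfun{t_i}_{\sigma_l}=\sfix^l[i]$ for every $l$. This is exactly the hypothesis of the second sub-clause of $CompFi$, which therefore places $t_i$ among $t_1',\ldots,t_m'$, i.e.\ $t_i\in\conc(f_i)$. Hence $[t_1,\ldots,t_n]\in\conc(fixes)$ and $r_1\in\conc(r)$.

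The only part that is more than definition-chasing is the domain bookkeeping in the second claim, specifically forcing the variable $q$ used by $t_i$ into $V$. The plan there is to combine two facts: $r_1$ is required to carry the \emph{same} match part as $r$, so its unification on any example has domain precisely $V$; and $\semanticsfun{\cdot}$ evaluates to $\bot$ on an unbound variable, so any $\substrAppend$ that remains consistent with every example at a fixed output position must mention only variables of $V$. Once that is nailed down, both halves reduce to a one-to-one correspondence between the clauses of the completeness predicates and the cases of the semantics.
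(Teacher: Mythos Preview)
Your proposal is correct and follows the same approach as the paper: both argue that the result is immediate from unfolding Definitions~\ref{def:commandcomp}--\ref{def:completerule} against the semantics and the definition of $\conc$. You are considerably more thorough than the paper, which only spells out the first claim and leaves the second (including the variable-domain bookkeeping you flag) entirely implicit.
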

\begin{proof}
Immediate from Definition~\ref{def:completerule}
and the definition of $\conc$.

  More specifically, assume there is some $r' \in \conc(r)$ not consistent with an example $e \in E$. The only way it can be inconsistent with $\scmd$ or $\serr$ is if some match expression in $cmd$ or $err$ is a fixed string not equal to some string in $\scmd$ or $\serr$. Definition \ref{def:inputcomp} precludes this possibility.

  Similarly, it follows from Definition \ref{def:partialficomp} that if one of the fix expressions is a constant string, then it is consistent with every example fix. By Definition \ref{def:ficomp}, if it is a set of $\substrAppend$ expressions, then each one is consistent with every example.

\end{proof}

We now show that \textsc{SynthSubstring} and \textsc{SynthFix} have the intended behaviour
with respect to Definition~\ref{def:completerule}.
\begin{lemma}[Correctness of \textsc{SynthSubstring}]
\label{lem:syntsub}
Let \linebreak
$E=[(\scmd^1,\serr^1,\sfix^1),\ldots,(\scmd^n,\serr^n,\sfix^n)]$
be a sequence of examples where every $\sfix^v$ has length $m$, 
$V=\{i_1,\ldots,i_j\}$ be a set of variables, and
$i\leq m$ be an index in the output fix.
If there exist $a$ and $b$ such that $\sfix^a[i]\neq \sfix^b[i]$, then
$\textsc{SynthSubstring}(E,V,i)=S$ iff
$Comp(S,E,V,i)$.
\end{lemma}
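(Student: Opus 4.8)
The plan is to first unfold the two completeness definitions to see exactly what $CompFi(S,E,V,i)$ — the four-argument $f_i$-completeness predicate of Definition~\ref{def:ficomp}, which is what $Comp(S,E,V,i)$ abbreviates in the lemma — requires here. Since the hypothesis supplies $a,b$ with $\sfix^a[i]\neq\sfix^b[i]$, no constant expression $\fconstStr(s)$ can have $\semanticsfun{\fconstStr(s)}_{\state_l}=\sfix^l[i]$ for all $l$, so the $\fconstStr$ clause of $PCompFi$ is necessarily false and $CompFi(S,E,V,i)$ collapses to: $S$ is \emph{precisely} the set $S^\star$ of all $t=\substrAppend(\pl,\pr,\pref,\suf,\varexpr(k))$ with $k\in V$ such that $\semanticsfun{t}_{\state_l}=\sfix^l[i]$ for every $1\le l\le n$, where $\state_l$ binds $\varexpr(k)$ to $(\scmd^l@\serr^l)[k]$. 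Because \textsc{SynthSubstring} is deterministic and $S^\star$ is uniquely determined by $E,V,i$, both directions of the stated biconditional reduce to the single equality $\textsc{SynthSubstring}(E,V,i)=S^\star$.

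To prove this equality I would use a loop invariant for the outer \textbf{forall} of \textsc{SynthSubstring}. Write $E=[e_1,\ldots,e_n]$ with $e_v=(\scmd^v,\serr^v,\sfix^v)$, so in the procedure body $e=e_1$ and the loop ranges over $e_2,\ldots,e_n$. The invariant: after the iteration that consumes $e_v$, the variable $F$ holds exactly the set of $t=\substrAppend(\pl,\pr,\pref,\suf,\varexpr(k))$ with $k\in V$ and $\semanticsfun{t}_{\state_w}=\sfix^w[i]$ for all $w\le v$. The base case $v=1$ is the specification of $\textsc{AllSubstrings}(e_1,V,i)$ quoted in the paper — it enumerates precisely the $\substrAppend$ expressions over a variable of $V$ whose evaluation on $\scmd^1,\serr^1$ yields $\sfix^1[i]$; this is where I would check the corner cases, namely the left/right direction argument that makes $\constPos(0)$ the start or end index, and the fact that an expression with a $\symPos$ or $\substr$ subterm evaluating to $\bot$ is not consistent with any example and so is correctly excluded. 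For the inductive step, the body for $e_{v+1}$ keeps a function $fun=\substrAppend(\pl,\pr,\pref,\suf,\varexpr(j))$ iff the \textsc{eval} test succeeds — i.e. iff $fun$ applied with $\varexpr(j)$ bound to the relevant input string produces $\sfix^{v+1}[i]$ — which holds iff $\semanticsfun{fun}_{\state_{v+1}}=\sfix^{v+1}[i]$; intersecting this with the inductive description of $F$ gives the invariant at $v+1$. Instantiating at $v=n$ gives $F=S^\star$, the returned value.

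The step I expect to be the most delicate — though more a matter of care than of depth — is reconciling the index arithmetic in the algorithm with the $\state_l$ of Definition~\ref{def:ficomp}. When $j<\length{\scmd^{v+1}}$ the body evaluates $fun$ on $\scmd^{v+1}[j]$, and otherwise on $\serr^{v+1}[j-\length{\scmd^{v+1}}]$; in both cases this string is exactly $(\scmd^{v+1}@\serr^{v+1})[j]=\state_{v+1}(j)$, and since $fun$ reads only $\varexpr(j)$ with $j\in V$, the value $\semanticsfun{fun}_{\state_{v+1}}$ is well defined and agrees with what \textsc{eval} computes. Once this correspondence is nailed down, the rest is a direct unfolding of the $\langc$ semantics in \figref{dslsemantics}, so I do not anticipate a genuine obstacle.
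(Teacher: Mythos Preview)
Your proposal is correct and follows essentially the same approach as the paper: induction on the length of $E$ (phrased as a loop invariant), with the base case supplied by the specification of \textsc{AllSubstrings} and the inductive step given by the filtering in the loop body. Your version is more explicit---you unfold $CompFi$ to isolate the target set $S^\star$, and you spell out the index-arithmetic correspondence between \textsc{eval} and $\state_l$---whereas the paper's proof is a three-sentence sketch, but the skeleton is the same.
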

\begin{proof}
This lemma states that  \textsc{SynthSubstring} returns all the $\substrAppend$ functions  consistent
with  the given examples.
Again the proof is by induction on the length of $E$. The base case $|E|=1$
follows from the definition of \textsc{AllSubstrings}.
The inductive step is also trivial: \textsc{SynthSubstring} simply runs all the functions computed so far
on the added example and filters out those that are not consistent with it.
Since, by IH, the varaible $F$ was correct at the beginning of the loop, it remains correct.
Notice that the order of the examples does not matter.
\end{proof}

\begin{lemma}[Correctness of \textsc{SynthFix}]
\label{lem:corsynthfix}
Let
$S=[s_1,\ldots, s_n]$ be a list of strings,
$T=[t_1,\ldots, t_n]$ be a sequence of symbolic fix expressions,
$$E=[(\scmd^1,\serr^1,\sfix^1),\ldots,(\scmd^n,\serr^n,\sfix^n)]$$
be a sequence of examples where every $\sfix^v$ has length $m$, 
$e=(\scmd,\serr,\sfix)$ be an example such that for each $1\leq i \leq n$, $\sfix[i]=s_i$, and
$V$ be a set of variables.
If $PCompF(T, E)$, then 
$\textsc{SynthFix}(S,T,e::E,V)=fixes$ iff
$CompF(fixes, e::E, V)$.
\end{lemma}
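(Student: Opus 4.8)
The argument is a direct case analysis, not an induction: \textsc{SynthFix} does not recurse, it simply loops over the output positions and, at position $i$, sets $f_i$ using only $t_i$, $s_i=\sfix[i]$, and the examples. So I would fix an index $i$ with $1\le i\le n$ and prove $CompFi(f_i,e::E,V,i)$; then $CompF(fixes,e::E,V)$ follows immediately from the definition of fix completeness. (The length guard at the head of \textsc{SynthFix} does not fire here, since $S$ and $T$ both have length $n$ and all example fixes have length $n$.) Throughout, the hypothesis $PCompF(T,E)$ --- more precisely $PCompFi(t_i,E,i)$ --- is what lets me read off, from the shape of $t_i$, whether the old examples in $E$ all agree at output position $i$.

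\textbf{Case 1: $t_i=\fconstStr(s)$ and $s=s_i$.} By $PCompFi(t_i,E,i)$ every example in $E$ has value $s$ at output position $i$, and since $\sfix[i]=s_i=s$, every example in $e::E$ does too. \textsc{SynthFix} leaves $f_i=[\fconstStr(s_i)]$; since all example outputs in $e::E$ agree at position $i$, this is the form $CompFi$ prescribes for such positions and $PCompFi(f_i,e::E,i)$ holds. Hence $CompFi(f_i,e::E,V,i)$.

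\textbf{Case 2: $t_i=\fconstStr(s)$ with $s\neq s_i$, or $t_i$ is not a constant expression.} In either sub-case I first argue that two examples in $e::E$ disagree at output position $i$: if $t_i=\fconstStr(s)$ with $s\neq s_i$, then $PCompFi$ forces all of $E$ to equal $s$ while $e$ equals $s_i\neq s$; if $t_i$ is not constant, the contrapositive of $PCompFi(t_i,E,i)$ already yields two examples in $E$ that disagree at position $i$. With this side condition discharged, Lemma~\ref{lem:syntsub} tells us that $\textsc{SynthSubstrings}(e::E,V,i)$ --- which is exactly what \textsc{SynthFix} assigns to $f_i$ in this branch --- returns precisely the set of all $\substrAppend$ expressions over variables in $V$ consistent with every example in $e::E$ at position $i$. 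Since that set contains only $\substrAppend$ (never $\fconstStr$) expressions and the example outputs disagree, $PCompFi(f_i,e::E,i)$ holds, and Lemma~\ref{lem:syntsub} supplies both properties in the $\substrAppend$ clause of $CompFi$, even when the set is empty. Hence $CompFi(f_i,e::E,V,i)$.

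Combining the two cases over all $i$ gives $CompF(fixes,e::E,V)$ for $fixes$ equal to the output of \textsc{SynthFix}, which is the ``if'' direction. For ``only if'', note that $CompF$ pins down each component uniquely --- position $i$ is forced to be $[\fconstStr(s)]$ when all example outputs agree there and otherwise the full set of consistent $\substrAppend$ expressions --- so any $fixes$ with $CompF(fixes,e::E,V)$ must coincide componentwise with \textsc{SynthFix}'s output, giving the claimed equivalence. The main obstacle is not any single hard step but the bookkeeping in Case~2: verifying that the ``there exist $a,b$ with $\sfix^a[i]\neq\sfix^b[i]$'' premise of Lemma~\ref{lem:syntsub} is genuinely established in every sub-case (this is exactly where the precondition $PCompF(T,E)$ earns its keep), and checking that the degenerate situations --- an empty \textsc{SynthSubstrings} output, or a position that was never promoted --- remain consistent with the definitions of $CompFi$ and $CompF$.
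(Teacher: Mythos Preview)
Your proposal is correct and follows essentially the same approach as the paper: handle each output position separately and invoke Lemma~\ref{lem:syntsub} in the non-constant case. The paper phrases this as ``induction on $m$'' (the number of output positions), but since the positions are independent your per-position case analysis is equivalent; your write-up is in fact more careful than the paper's sketch in explicitly discharging the disagreement precondition of Lemma~\ref{lem:syntsub} and in arguing the ``only if'' direction via uniqueness of the complete $fixes$.
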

\begin{proof}
Immediate by induction on $m$ and by using 
Lemma~\ref{lem:syntsub} at each step.\\
  As $\textsc{SynthFix}$ iterates over each string in the example fix, and compares it to the corresponding fix expression, we can see that the only interesting case of the proof is when $t_i$ is either a fixed string not equal to the correpsonding $\sfix^{n+1}[i]$ in $e$, or $t_i$ is a set of $\substrAppend$ expressions. In this case, correctness follows directly from Lemma~\ref{lem:syntsub}, as $\textsc{SynthFix}$ calls $\textsc{SynthSubstring}$ to refine the set of substring operation to those generating $\sfix^{n+1}[i]$ as well as all other $\sfix[i]$ values.

\end{proof}

Next we show the correctness of \textsc{RefineRule} and \textsc{FindVariables}.
\begin{lemma}[Correctness of \textsc{FindVariables}]
\label{lem:corrfindvars}
Let \linebreak
$E=[(\scmd^1,\serr^1,\sfix^1),\ldots,(\scmd^n,\serr^n,\sfix^n)]$ be a sequence of examples, and
$cmd$ and $err$ be two sequences of match expressions of lengths
$|\scmd^i|$ and $|\serr^i|$ respectively.
If there exists two sets of variables $V_1$ and $V_2$ such that 
$CompC(cmd,E[1::(n-1)],V_1)$ and $CompE(err,E[1::(n-1)],V_2)$,
then
\begin{itemize}
\item if $(cmd',V_1')=\textsc{FindVariables}(\scmd^n,cmd,0)$, then\linebreak
$CompC(cmd',E,V_1')$.
\item if $(err',V_2')=\textsc{FindVariables}(\serr^n,err,|\scmd^i|)$, then\linebreak
$CompE(err',E,V_2')$.
\end{itemize}
\end{lemma}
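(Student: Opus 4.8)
The plan is to prove both bullets by a single position-by-position argument, exploiting the fact that \textsc{FindVariables} processes each index $i$ of its input list independently inside the \textbf{for} loop, so that $CompC(cmd',E,V_1')$ (resp. $CompE(err',E,V_2')$) can be established clause-by-clause. First I would dispatch the length check: since $CompC(cmd,E[1::(n-1)],V_1)$ and $CompE(err,E[1::(n-1)],V_2)$ only make sense when all the $\scmd^v$ (resp. all the $\serr^v$) have a common length, and $cmd$, $err$ have lengths $|\scmd^i|$, $|\serr^i|$, we get $|\scmd^n|=|cmd|$ and $|\serr^n|=|err|$, so the $n\neq m$ branch returning $\bot$ is never taken. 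I would then prove the command case in full; the error case is word-for-word the same once one observes that the offset $o=|\scmd^i|$ makes the freshly generated name $newId=i+o$ satisfy the index bookkeeping ``$i+a=j$'' demanded by Definition~\ref{def:errcomp} (with $a=|cmd|$), exactly as $o=0$ matches Definition~\ref{def:commandcomp}; also $V_1'$ (resp. $V_2'$) is by construction precisely the set of indices the loop promoted to \varmatch.

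For a fixed position $i$ I would split on the three \textbf{case} arms of \textsc{FindVariables}. Arm (1): $t_i=\constStr(s)$ and $s=s_i$; the output keeps $\constStr(s_i)$, and since the constant clause of $CompC(cmd,E[1::(n-1)],V_1)$ says $c_i=\constStr(s)$ iff every one of the first $n-1$ examples carries $s$ at position $i$, and the new example also carries $s$ there, the constant clause of $CompC(cmd',E,V_1')$ holds at $i$. Arm (2): $t_i=\constStr(s)$ and $s\neq s_i$; the output is $\varmatch(i,pref,suf)$ with $pref=\textsc{LongestCommonPrefix}(s,s_i)$ and $suf=\textsc{LongestCommonSuffix}(s,s_i)$, and I would check the four \varmatch conditions of Definition~\ref{def:commandcomp}: $i\in V_1'$ with matching index is immediate; the ``two differing examples'' witness is any of the first $n-1$ examples (all carrying $s$ at $i$) paired with the $n$-th (carrying $s_i$); and because every one of the first $n-1$ examples has \emph{exactly} $s$ at position $i$, the longest common prefix (resp. suffix) of the position-$i$ strings over all of $E$ is exactly $\textsc{LongestCommonPrefix}(s,s_i)$ (resp. the suffix analogue). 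Arm (3): $t_i=\varmatch(j,l,r)$; the output is $\varmatch(j,\textsc{LongestCommonPrefix}(s_i,l),\textsc{LongestCommonSuffix}(s_i,r))$, the ``two differing examples'' condition is inherited directly from $CompC(cmd,E[1::(n-1)],V_1)$, and $j\in V_1'$ is immediate; the only thing to argue is that the new prefix and suffix are the longest ones consistent with all of $E$.

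I expect that last point in Arm (3) to be the only real content, and it reduces to the compositional property $\textsc{LongestCommonPrefix}(x_1,\dots,x_n)=\textsc{LongestCommonPrefix}(\textsc{LongestCommonPrefix}(x_1,\dots,x_{n-1}),x_n)$, and symmetrically for suffixes. By hypothesis $l$ is \emph{already} the longest common prefix of the position-$i$ strings of $E[1::(n-1)]$, so folding in $s_i$ yields the global longest common prefix over $E$; this is precisely the statement that makes the lazy, one-example-at-a-time refinement preserve the global completeness invariant, and it is where I would be most careful — in particular with the degenerate cases where one string is a prefix (or suffix) of another and where $l$ or $r$ is $\varepsilon$. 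Everything else (the length check, the index/offset accounting, and the identification of $V_1'$ with the set of promoted indices) is routine bookkeeping.
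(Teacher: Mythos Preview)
Your proposal is correct and follows essentially the same approach as the paper: a position-by-position case analysis over the three arms of \textsc{FindVariables}, with the only substantive step being the compositional/associativity property of longest common prefix and suffix used in the \varmatch\ refinement case. You are somewhat more explicit than the paper about the length check and the offset bookkeeping for the $err$ case, but the structure and the key idea are identical.
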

\begin{proof}
The two statements can be proved separately but the proof is identical. 
The proofs are both by induction on the length of the first argument of
\textsc{FindVariables} and are very simple case analysis following
Definitions~\ref{def:commandcomp} and~\ref{def:errcomp}.

W.L.O.G, consider the case for $cmd$. \linebreak 
Over a run of $\textsc{FindVariables}$, a $\constStr(s)$ in $cmd$ is only changed if, at the $n$th iteration of the loop, $cmd[i]$  is $\constStr(S)$ where $S$ is not equal to $\scmd^n[i]$. In that case \textsc{FindVariables} introduces a new variable match expression $\varmatch(i,l,r)$ where $l$ and $r$ are (respectively) the longest shared prefix and suffix of $S$ and $\scmd^n[i]$. By the I.H, $cmd$ was command complete. It follows directly from Definition~\ref{def:commandcomp} that command completeness continues to hold.

A $\varmatch(i,l,r)$ in $cmd$ is only changed if, at iteration $n$, the longest common prefix (suffix), $x$, shared by $\scmd^n[i]$ and $l$ ($r$) is not equal to $l$ ($r$). In this case $\varmatch(i,l,r)$ is replaced by $\varmatch(i,l',r)$ ($\varmatch(i,l,r')$), where $r'$ ($l'$) is equal to the common prefix (suffix) $x$. By the I.H, $l$ was the longest prefix (suffix) shared by the first $n-1$ values of $\scmd^k[i]$. Clearly, $x = r'$ ($l'$) is the longest prefix (suffix) shared by the first $n$ values of $\scmd^k[i]$. Moreover, by the I.H. $cmd$ satisfied all other criteria for command completeness. Thus, it follows that command completeness continues to hold. It is clear that determining the shorteset prefix and suffix shared by every $\scmd^k[i]$ does not depend on the order in which the examples are presented.

\end{proof}

\begin{lemma}[Correctness of \textsc{RefineRule}]
\label{lem:refinerule}
Let $r$ be a symbolic rule,
$E$ be a sequence of examples, 
 $e$ be an example, and \linebreak
 $r'=\textsc{RefineRule}(r,E,e)$.
 If $CompR(r,E)$ then $CompR(r',e::E)$.
\end{lemma}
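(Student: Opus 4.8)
The plan is to unfold \textsc{RefineRule} one assignment at a time, discharge each step with the matching component lemma, and reassemble the result via Definition~\ref{def:completerule}. Write $r \equiv \sfixRule{cmd}{err}{\symfix}$ and $e = (\scmd,\serr,\sfix)$. Since $CompR(r,E)$ holds, Definition~\ref{def:completerule} supplies a variable set $V$ with $Comp(cmd,err,E,V)$ and $CompF(\symfix,E,V)$, and Definition~\ref{def:inputcomp} lets us split $V = V_1 \cup V_2$ with $CompC(cmd,E,V_1)$ and $CompE(err,E,V_2)$, where $V_1$ ranges over command indices and $V_2$ over error indices, so the two are disjoint. If the component lengths of $e$ differ from those of the examples in $E$ (equivalently, from the lengths of $cmd$, $err$, $\symfix$), then \textsc{FindVariables} returns $\bot$, hence $r' = \bot$ and there is nothing to prove; so assume the lengths agree.

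First I would handle the two match components. \textsc{RefineRule} computes $(cmd',V_c) = \textsc{FindVariables}(\scmd,cmd,0)$ and $(err',V_e) = \textsc{FindVariables}(\serr,err,|\scmd|)$. I would instantiate Lemma~\ref{lem:corrfindvars} with its example sequence taken to be $E$ extended by $e$ as a final entry, so that $e$ plays the role of the last example while the original $E$ is the prefix $E[1::(n-1)]$; then the hypotheses $CompC(cmd,E,V_1)$ and $CompE(err,E,V_2)$ give command- and error-string completeness of $cmd'$ and $err'$ (with variable sets $V_c$, $V_e$) for that extended sequence. Since $CompC$ and $CompE$ only quantify over ``every example of'' and ``two examples of'' their sequence, they are invariant under permuting it, so we obtain $CompC(cmd', e::E, V_c)$ and $CompE(err', e::E, V_e)$. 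The offsets $0$ and $|\scmd|$ used in the two calls keep $V_c$ and $V_e$ in disjoint index ranges, so Definition~\ref{def:inputcomp} yields $Comp(cmd',err',e::E,V)$ with $V = V_c \cup V_e$.

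Next the fix component, which \textsc{RefineRule} sets to $\symfix' = \textsc{SynthFix}(\sfix,\symfix,e::E,V)$. I would first observe that $CompF(\symfix,E,V)$ implies $PCompF(\symfix,E)$, because by Definition~\ref{def:ficomp} each $CompFi(f_i,E,V,i)$ contains $PCompFi(f_i,E,i)$ as its first conjunct. Applying Lemma~\ref{lem:corsynthfix} with $S = \sfix$, $T = \symfix$, example $e$, sequence $E$, and variable set $V$, the hypothesis $PCompF(\symfix,E)$ holds, so the returned $\symfix'$ satisfies $CompF(\symfix', e::E, V)$. The point that needs care is that the $V$ passed to \textsc{SynthFix} must be exactly the variable set carried by the refined matches --- both equal $V_c \cup V_e$ by construction --- so that the $V$ witnessing $Comp(cmd',err',e::E,V)$ is the same $V$ relative to which $CompF(\symfix',e::E,V)$ is defined.

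Finally, $r' = \sfixRule{cmd'}{err'}{\symfix'}$, and the single variable set $V$ satisfies both $Comp(cmd',err',e::E,V)$ and $CompF(\symfix',e::E,V)$, which is precisely the condition $CompR(r',e::E)$ of Definition~\ref{def:completerule}. The work is essentially bookkeeping rather than conceptually deep: the main obstacle is to verify carefully that the same $V$ threads through Lemmas~\ref{lem:corrfindvars} and~\ref{lem:corsynthfix} --- that \textsc{FindVariables}'s output variables are exactly what \textsc{SynthFix} consumes, with $V_c$ and $V_e$ in disjoint index ranges --- and to reconcile the ``new example last'' convention of those lemmas with the ``$e::E$'' ordering used inside \textsc{RefineRule}, which is done via permutation invariance of the completeness predicates. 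The degenerate length-mismatch case also deserves an explicit sentence, since $CompR$ is defined only for genuine symbolic rules and not for $\bot$.
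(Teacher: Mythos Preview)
Your proposal is correct and follows essentially the same approach as the paper, which simply states the result is ``immediate from Definition~\ref{def:completerule} and Lemmas~\ref{lem:corsynthfix} and~\ref{lem:corrfindvars}.'' You have merely spelled out the bookkeeping that the paper leaves implicit --- threading the common variable set $V = V_c \cup V_e$ through both component lemmas, reconciling the example orderings via permutation invariance, and observing that $CompF$ entails $PCompF$ --- all of which is sound and faithful to the intended argument.
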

\begin{proof}
Immediate from Definition~\ref{def:completerule} and Lemmas~\ref{lem:corsynthfix} and~\ref{lem:corrfindvars}.
\end{proof}

\begin{lemma}[Correctness of \textsc{SynthRules}]
\label{lem:synthrules}
Let
$E$ be a sequence of examples.
If $r'=\textsc{SynthRules}(E)$ then $CompR(r,E)$.
\end{lemma}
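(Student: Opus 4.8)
The plan is a short induction on the length of $E$, unrolling the loop inside \textsc{SynthRules} one example at a time. Write $E=[e_0,\ldots,e_n]$ (if $E$ is empty, or if $\textsc{SynthRules}(E)=\bot$, there is nothing to certify), put $r_0=\textsc{ConstRule}(e_0)$ and, for $1\le i\le n$, $r_i=\textsc{RefineRule}(r_{i-1},[e_0,\ldots,e_{i-1}],e_i)$, so that $\textsc{SynthRules}(E)=r_n$. I would prove by induction on $i$ that $CompR(r_i,[e_0,\ldots,e_i])$ for all $0\le i\le n$, which at $i=n$ is exactly the claimed $CompR(r',E)$.

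The base case $i=0$ is a direct unfolding of Definitions~\ref{def:commandcomp}--\ref{def:completerule} on the singleton list $[e_0]$ with $e_0=(\scmd^0,\serr^0,\sfix^0)$. As the witness variable set I would take $V$ to be \emph{all} input positions $\{1,\ldots,|\scmd^0|+|\serr^0|\}$. Since \textsc{ConstRule} leaves every component of $cmd$ and $err$ a constant, the $\varmatch$ and prefix/suffix clauses of $CompC$ and $CompE$ are vacuous, and the constant clause holds because each constant is exactly the token at that position in the one example; hence $Comp(cmd,err,[e_0],V)$. For the fixes, $PCompFi$ holds for the same reason, and in the biconditional of Definition~\ref{def:ficomp} both sides are false: the left side because every $f_i$ is a constant, and the right side because a constant list cannot enumerate the set of $\substrAppend$ expressions consistent with $e_0$, which is nonempty as soon as $V$ is nonempty. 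Hence $CompFi(f_i,[e_0],V,i)$ for every $i$, so $CompF(fixes,[e_0],V)$ and $CompR(r_0,[e_0])$.

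For the inductive step, assume $CompR(r_{i-1},[e_0,\ldots,e_{i-1}])$. Lemma~\ref{lem:refinerule} (correctness of \textsc{RefineRule}) gives $CompR(r_i,\,e_i::[e_0,\ldots,e_{i-1}])$, and the observation noted right after Definition~\ref{def:completerule}---that $CompR(r,E)$ does not depend on the ordering of $E$---upgrades this to $CompR(r_i,[e_0,\ldots,e_i])$. Iterating up to $i=n$ completes the proof.

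The two points that need care are pure bookkeeping. First, in the base case one must pick the witness $V$ to be the full set of input positions: the naive choice $V=\varnothing$ actually makes the biconditional in $CompFi$ fail, since with no variables available there is no consistent $\substrAppend$ and the right side becomes true while the left side is false. Second, \textsc{RefineRule} is stated for an argument of the form $e::E$ with the fresh example prepended whereas \textsc{SynthRules} appends, so the inductive step genuinely relies on permutation-invariance of $CompR$ rather than being a literal rewrite. Beyond that, no new combinatorial work is required: the substance is already carried by Lemmas~\ref{lem:syntsub}--\ref{lem:refinerule} together with the exhaustiveness of \textsc{AllSubstrings}.
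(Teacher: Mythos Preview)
Your proof is correct and follows the same induction-on-$|E|$ skeleton as the paper: base case via \textsc{ConstRule}, inductive step via Lemma~\ref{lem:refinerule}, with permutation invariance of $CompR$ bridging the prepend/append mismatch. Your explicit choice of the witness $V$ as the full set of input positions in the base case, together with the observation that $V=\varnothing$ would make the biconditional in Definition~\ref{def:ficomp} fail, is a legitimate refinement of a detail the paper leaves under the word ``clearly.''
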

\begin{proof}
By induction on the length of $E$.
Case $|E|=1$ and $E=[e]$: the result of $\textsc{ConstRule}(e)$ clearly satisfies $CompR(r,E)$.
The inductive step follows from Lemma~\ref{lem:refinerule}.
\end{proof}

We can now conclude the proofs of Theorems 1, 2, and 3.
Theorem 1 and 2 follow from the order invariance of $CompR(r,E)$ and from Proposition~\ref{prop:deftocon}
and Lemmas~\ref{lem:synthrules}.

\paragraph{Proof of Theorem 3}
Let 
$$E=[(\scmd^1,\serr^1,\sfix^1),\ldots,(\scmd^n,\serr^n,\sfix^n)]$$
be a sequence of examples,
$$\textsc{SynthRules}(E)=r=\sfixRule{\origCmd}{\errMsg}{fixes}$$ 
be the symbolic rule for $E$ and
$$r'=\fixRule{\origCmd'}{\errMsg'}{\fixCmd'}$$ 
be a concrete rule that is consistent with all the examples in $E$.
If $r'$ belongs to $\conc(r)$ then we are done.
Assume it doesn't. 

\textbf{Case 1.} $\origCmd=cmd'$ and $err=err'$. Then by Proposition~\ref{prop:deftocon} there exists some $t_{i_1},\ldots,t_{i_k}$ such that 
each $t_{i_v}$ is of the form $\substrAppend(\pl^v,\pr^v,\pref^v,\suf^v,\varexpr(j_v))$,
and for all $j\leq n$
$\sfix^j[i_v]=s$ for some $s$ (i.e., the output is a function of the input, but 
all examples can be captured using a constant output).
In this case it is enough to create a new example $e'$ starting from any example $(\scmd,\serr,\sfix)\in E$ 
where for each $i_v$ we
modify $s_{j_v}=(\scmd @ \serr)[j_v]$ so that 
$\substrAppend(\pl^v,\pr^v,\pref^v,\suf^v,\varexpr(j_v))$ now returns a value different from the previous one (this can be done
by simply adding a new character between $\pl^v$ and $\pr^v$).
$\sfix$ is replaced by the result of applying $r'$ to the modified input.

\textbf{Case 2.}
$\origCmd @ err\neq cmd' @ err'$.
This means that $r'$ uses more variables then $r$
(notice that from Definition~\ref{def:completerule} the set of variables used
by $r'$ is necessary).
This can be fixed by changing the input of any example $(\scmd,\serr,\sfix)\in E$.
For each variable $\varexpr(i)$ that is in $cmd' @ err'$ but not in $\origCmd @ err$
replace the string $(\scmd @ \serr)[i]=a_1\ldots a_n$ with $ba_1 \ldots a_n$
where $b$ is a symbol not appearing in $a_1\ldots a_n$.
$\sfix$ is replaced by the result of applying $r'$ to the modified input.
If the rule is still not in $\conc(r')$ we can then modify the example using the techniques from Case 1 since
now $\origCmd=cmd'$ and $err=err'$.

\fi

\end{document}